\documentclass[letterpaper,12pt]{article}

\usepackage{amsmath,amssymb,amsthm,bbm,tabularx}
\usepackage{natbib}
\usepackage{hyperref}
\hypersetup{colorlinks=false,pdfborder={0 0 0}}
\usepackage{color}
\usepackage{here}
\usepackage{graphicx}
\usepackage[small]{caption}
\usepackage[margin=1in]{geometry}

\newcommand\B{\mathcal B}
\newcommand\E{\mathbb E}
\newcommand\I{\mathbbm1}
\renewcommand\L{\mathcal L}
\renewcommand\P{\mathbb P}
\renewcommand\S{\mathcal S}

\newcommand\cites[1]{\citeauthor{#1}'s (\citeyear{#1})}
\newcommand\MR[1]{\href{http://www.ams.org/mathscinet-getitem?mr=#1}{MR#1}.}

\newtheorem{theorem}{Theorem}
\newtheorem{lemma}[theorem]{Lemma}

\newtheorem{proposition}[theorem]{Proposition}
\theoremstyle{definition}
\newtheorem{definition}[theorem]{Definition}
\newtheorem{remark}[theorem]{Remark}

\newcommand\algorithm[2]{
\noindent
\begin{tabularx}{\textwidth}{|X|}
\hline
\vspace{-2pt}
\textbf{Algorithm #1.} #2\\
\hline
\end{tabularx}
}

\begin{document}

\def\spacingset#1{\renewcommand{\baselinestretch}%
{#1}\small\normalsize} \spacingset{1}

\begin{center}
\huge Expand and Contract\\[.1cm]
\large Sampling graphs with given degrees and other combinatorial families\\[.1cm]
\normalsize James Y. Zhao\\[.5cm]

\parbox{.85\textwidth}{
\small
\textit{Abstract.} Sampling from combinatorial families can be difficult. However, complicated families can often be embedded within larger, simpler ones, for which easy sampling algorithms are known. We take advantage of such a relationship to describe a sampling algorithm for the smaller family, via a Markov chain started at a random sample of the larger family. The utility of the method is demonstrated via several examples, with particular emphasis on sampling labelled graphs with given degree sequence, a well-studied problem for which existing algorithms leave much room for improvement. For graphs with given degrees, with maximum degree $O(m^{1/4})$ where $m$ is the number of edges, we obtain an asymptotically uniform sample in $O(m)$ steps, which substantially improves upon existing algorithms.
}
\end{center}

\section{Introduction}

Combinatorial structures are a natural tool for encoding the multitude of discrete data in our information-based world. When one wants to make a statistical inference on this data, it often becomes necessary to compare an observed structure to a typical one, which leads to the problem of sampling from the uniform measure on a family of combinatorial structures.

For the simplest families---for example, subsets of a given set, lattice points in a box, or graphs with a given vertex set---it is typically easy to sample uniformly. However, applications often demand additional constraints---for example, subsets with a given size, lattice points with a given norm, or graphs with a given degree sequence---resulting in more complicated families for which uniform sampling algorithms can be much more elusive.

Such a constraint creates a natural embedding of the complicated family (which we will call $\S_0$) within the simpler family (which we will call $\S$); from this, a sampling algorithm for $\S$ can be used to derive one for $\S_0$. The simplest such algorithm is to repeatedly sample from $\S$ and discard samples until a member of $\S_0$ is obtained; this is easy and exactly uniform, but often one must wait an exponentially long time.

Faced with exponentially many unwanted samples, a natural solution is to adjust them until they are in $\S_0$. We will introduce a framework where the adjustment is done by a Markov chain, which improves at each step with respect to a graded partition $\S=\S_0\sqcup\S_1\sqcup\cdots\sqcup\S_k$ of the larger family, until eventually $\S_0$ is reached. In the ideal case, expanding from $\S_0$ to $\S$ achieves uniformity by taking advantage of a uniform sampling algorithm for $\S$, and this uniformity is preserved by the Markov chain during contraction back to $\S_0$.

\begin{figure}[H]
\centering\parbox{.8\textwidth}{
\centering
\includegraphics[width=.6\textwidth]{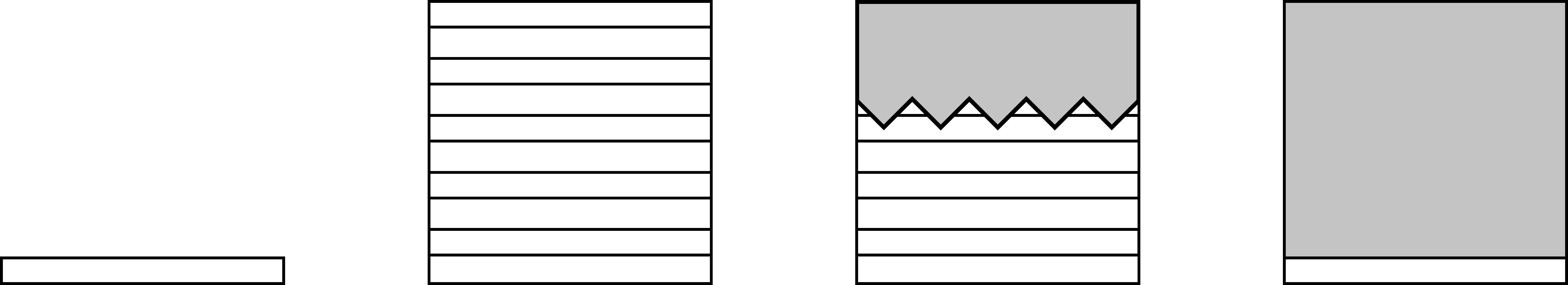}
\caption[Expand and Contract]{Expand and Contract. From left to right: (1) the set $\S_0$; (2) expansion to $\S=\S_0\sqcup\cdots\sqcup\S_k$; (3) and (4) contraction back to $\S_0$. Uniformity is gained through expansion; the goal is the ensure it is not lost during contraction.}}
\end{figure}

The idea of finding additional symmetry though expanding the state space is not a new one. It appears throughout the theory of generating functions, particularly the technique of Poissonisation \citep{wilf} and the more recent Boltzmann Sampler \citep{duchon}. In a more statistical setting, the techniques of auxilliary variables and data augmentation \citep{andersen} also rely on this idea. For the problem of sampling graphs with given degrees, \cite{mckaywormald} took a very similar approach to ours, generating a simple graph by first generating a multigraph, then ironing out any loops and multiple edges.

Our contribution is a general framework to apply this idea in a combinatorial setting, and detailed analyses of several examples. It is typically easy to compute the running time of our algorithm, and while it can be difficult to determine how close the output is to uniform, coupling techniques can be very successful here, since we will often have a Markov chain whose initial state and whose transitions are both uniform in some sense.

\subsection{General results}
\algorithm{A}{
Expand and Contract: sampling from a finite set $\S_0$.\\[4pt]
Inputs:\\[-18pt]
\begin{enumerate}
\parskip 0cm
\item A finite superset $\S\supset\S_0$, for which a sampling algorithm exists.
\item A graded partition $\S=\S_0\sqcup\S_1\sqcup\cdots\sqcup\S_k$. For $x\in\S_i$, define the \textbf{badness} $b(x)=i$.
\item An irreducible, symmetric Markov chain $Q$ on $\S$.
\end{enumerate}
\vspace{-.2cm}
Algorithm:\\[-18pt]
\begin{enumerate}
\parskip 0cm
\item[(1)] Let $X_0$ be a random point in $\S$ according to the existing sampling algorithm.
\item[(2)] Run the chain $Q$, rejecting moves that increase badness. More precisely, considering $Q$ as a probabilistic function $\S\rightarrow\S$, let
\vspace{-.3cm}
\[X_{t+1}=Q(X_t)\I_{b(Q(X_t))\le b(X_t)}+X_t\I_{b(Q(X_t))>b(X_t)}.\]
\vspace{-.9cm}
\item[(3)] Stop when a member of $\S_0$ is reached. More precisely, output $X_\tau$, where
\vspace{-.3cm}
\[\tau=\min\{t\ge0:b(X_t)=0\}.\]
\vspace{-1.2cm}
\end{enumerate}
}\vspace{4pt}

Let $Q^*$ denote the modified chain, that is, let $X_{t+1}=Q^*(X_t)$. Observing that Algorithm A is simply a specialisation of the Metropolis algorithm, we immediately have the following.

\begin{proposition}
\label{prop:metropolis}
The uniform measure on $\S_0$ is a stationary distribution of $Q^*$. Therefore, if $Q^*$ is irreducible, then the law of $X_t$ converges to uniform in total variation as $t\rightarrow\infty$.
\end{proposition}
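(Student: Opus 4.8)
The plan is to recognize $Q^*$ as a Metropolis--Hastings chain with a hard-wall rejection rule, and then verify stationarity directly via the detailed balance equations. The proposal base chain $Q$ is symmetric and irreducible on $\S$, and the acceptance rule accepts a proposed move from $x$ to $y$ precisely when $b(y)\le b(x)$. This is exactly the Metropolis rule for the target distribution proportional to $\I_{b(x)=0}$ restricted appropriately, but since that target is supported only on $\S_0$ we must be slightly careful: the chain $Q^*$ moves around all of $\S$, yet we claim the \emph{uniform measure on $\S_0$} is stationary. The cleanest route is therefore not to invoke a black-box Metropolis theorem but to check the balance condition by hand.

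First I would write out the transition kernel of $Q^*$ explicitly. For $x\ne y$, the probability of moving from $x$ to $y$ is $Q(x,y)\,\I_{b(y)\le b(x)}$, with the rejected mass accumulating on the diagonal. Next I would restrict attention to $x,y\in\S_0$, where $b(x)=b(y)=0$, so the indicator $\I_{b(y)\le b(x)}$ equals $1$ for every such pair. On this set the chain therefore transitions according to $Q$ itself (plus a holding probability), and since $Q$ is symmetric, $Q(x,y)=Q(y,x)$, the detailed balance equation $\pi(x)Q^*(x,y)=\pi(y)Q^*(y,x)$ holds trivially for the uniform $\pi$ on $\S_0$. The subtlety I would flag is that $\S_0$ need not be closed under $Q^*$ viewed as a chain on all of $\S$; a point in $\S_0$ could propose a move to some $y\in\S_1$, which is accepted since $b(y)=1>0=b(x)$ is \emph{rejected} by the rule. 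I should double-check the direction of the inequality: moves that strictly increase badness are rejected, so from $x\in\S_0$ any proposal to $y\notin\S_0$ has $b(y)>0=b(x)$ and is rejected, meaning once the chain enters $\S_0$ it cannot leave. Thus $\S_0$ is in fact absorbing under $Q^*$, and the restriction of $Q^*$ to $\S_0$ is a genuine Markov chain that coincides with the Metropolis chain on $\S_0$ with uniform target.

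With $\S_0$ absorbing and the restricted chain reversible with respect to uniform $\pi$ on $\S_0$, stationarity of the uniform measure follows immediately: summing detailed balance over $x$ gives $\sum_x \pi(x)Q^*(x,y)=\pi(y)$ for each $y\in\S_0$, and since no probability flows out of $\S_0$, the uniform measure on $\S_0$ (extended by zero on $\S\setminus\S_0$) is stationary for $Q^*$ on all of $\S$. For the convergence claim, I would invoke the standard ergodic theorem for finite Markov chains: if $Q^*$ is irreducible on its relevant state space then it has a unique stationary distribution to which the law of $X_t$ converges in total variation. Here I would note the mild aperiodicity concern, since detailed-balance reversible chains can be periodic; but the rejection mechanism typically creates a positive holding probability (whenever some proposal is rejected), which breaks periodicity. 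I expect the main obstacle to be stating the irreducibility/aperiodicity hypotheses cleanly, since ``irreducible'' for $Q^*$ must be interpreted on $\S_0$ (the absorbing class) rather than on all of $\S$; the remaining detailed-balance verification is routine once the inequality direction and the absorbing property of $\S_0$ are pinned down correctly.
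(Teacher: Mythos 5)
Your proof is correct and takes essentially the same route as the paper, whose entire argument is the one-line observation that Algorithm A is a specialisation of the Metropolis algorithm. You simply unpack that black box---detailed balance on $\S_0$ from the symmetry of $Q$, the fact that $\S_0$ is absorbing because every move out of it increases badness and is rejected, and the finite-chain ergodic theorem---and your added care about the direction of the rejection inequality, the interpretation of irreducibility on the absorbing class, and aperiodicity is all sound.
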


There are a number of modifications that can be made to Algorithm A which can improve its performance in certain situations.

One issue is $Q^*$ may not be irreducible, and in particular, the chain might get stuck in an absorbing set disjoint from $\S_0$. This problem is alleviated by replacing the deterministic rejection of bad moves with a statistical mechanical model where moves are rejected probabilistically according to an energy function $H$ associated with each level of badness.

Formally, let $Q^*_H$ be the chain $Q$ where moves that changes the badness from $i$ to $j$ are accepted with probability $\min(e^{H(i)-H(j)},1)$. For example, if $H(i)=Ci$ for some constant $C\ge0$, then a decrease in badness is always accepted, while an increase in badness by $\Delta b$ is accepted with probability $\alpha^{\Delta b}$, where $\alpha=e^{-C}$ is interpreted as the rate of acceptance of bad moves. Note that $\alpha=1$ recovers the original chain $Q$ which accepts all moves, while $\alpha=0$ yields the chain $Q^*$ which always rejects bad moves.

This statistical mechanical setting carries a wealth of further enhancements. For example, Simulated Tempering \citep{marinari,madras} suggests starting with $\alpha=1$ and gradually decreasing $\alpha$. Often, the initial state $X_0$ is the stationary distribution of $Q$ ($\alpha=1$), while the limiting distribution $X_\infty$ is the stationary distribution of $Q^*$ ($\alpha=0$), so this may allow the chain to stay close to stationary at all times.

Another possibility is the strategy of \cite{wang}, which suggests dynamically adjusting the energy function to repel away from frequently observed badness levels. This is achieved by observing the badness $b$ at each step, and adjusting the energy function $H$ by increasing the value of $H(b)$. This favours moves which enter a level of badness that is not frequently observed, and can help prevent the chain from getting stuck by increasing the energy associated with a region in which the chain spends a lot of time.

One final important modification is to run the chain for additional steps after $\S_0$ is reached. Since Proposition \ref{prop:metropolis} gives convergence when $t\rightarrow\infty$, we can obtain any desired uniformity of output by simply running the chain for a large number of extra steps, and indeed, this is the idea behind Markov Chain Monte Carlo (MCMC) techniques. However, the goal here is to only prescribe a small number of additional steps, and in some cases, such as the example in Section \ref{sec:multistar}, the Expand and Contract part of the algorithm mixes sufficiently well that very few extra steps are needed to prove a bound on uniformity.

\vspace{8pt}
\algorithm{A$'$}{
Expand and Contract: bells and whistles version.\\[4pt]
Inputs:\\[-18pt]
\begin{enumerate}
\parskip 0cm
\item A finite superset $\S\supset\S_0$, for which a sampling algorithm exists.
\item A graded partition $\S=\S_0\sqcup\S_1\sqcup\cdots\sqcup\S_k$.
\item A irreducible, symmetric Markov chain $Q$ on $\S$.
\item (Optional) An energy function $H:\{1,\ldots,k\}\rightarrow\mathbb R$ on the badness.
\item (Optional) An dynamic modification to the energy function at each step.
\item (Optional) An prescribed number $T$ of additional steps to run.
\end{enumerate}
\vspace{-.2cm}
Algorithm:\\[-18pt]
\begin{enumerate}
\parskip 0cm
\item[(1)] Let $X_0$ be a random point in $\S$ according to the existing sampling algorithm.
\item[(2)] Run the chain $Q$. Reject moves that increase the badness, or with an energy function, accept moves from that change badness from $i$ to $j$ with probability $\min(e^{H(i)-H(j)},1)$. Dynamically modify the energy function if required.
\item[(3)] When a member of $\S_0$ is reached, run a further $T$ steps and stop.
\vspace{-10pt}
\end{enumerate}
}

\newpage
Again, convergence follows immediately from the Metropolis algorithm. Note that we are excluding dynamic adjustments to the energy function, since it can result in a non-Markovian process, which is beyond the scope of this paper.

{%use same number as prop:metropolis
\renewcommand{\thetheorem}{\ref{prop:metropolis}$'$}
\addtocounter{theorem}{-1}
\begin{proposition}
With a fixed energy function, as $T\rightarrow\infty$, the output of Algorithm A$'$ converges in total variation to a measure whose restriction to $\S_0$ is uniform.
\end{proposition}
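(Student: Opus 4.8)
The plan is to recognise the modified chain $Q^*_H$ as a Metropolis chain, compute its stationary distribution, and then transfer the standard convergence-to-stationarity of this chain to the actual output $X_{\tau+T}$ by means of the strong Markov property.

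First I would identify the stationary distribution. Since $Q$ is symmetric and a move from badness $i$ to badness $j$ is accepted with probability $\min(e^{H(i)-H(j)},1)$, the chain $Q^*_H$ satisfies detailed balance with respect to $\pi_H(x)\propto e^{-H(b(x))}$: for $x\neq y$ one checks that the unnormalised weights obey $e^{-H(b(x))}Q^*_H(x,y)=\min(e^{-H(b(x))},e^{-H(b(y))})\,Q(x,y)$, which is symmetric in $x$ and $y$ because $Q(x,y)=Q(y,x)$. Hence $\pi_H$ is reversible, and in particular stationary, for $Q^*_H$. The crucial observation is that every element of $\S_0$ has badness $0$, so $\pi_H$ assigns equal mass to all points of $\S_0$; that is, the restriction of $\pi_H$ to $\S_0$ is uniform.

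Next I would note that, because $H$ is finite-valued, every acceptance probability $\min(e^{H(i)-H(j)},1)$ is strictly positive, so $Q^*_H(x,y)>0$ exactly when $Q(x,y)>0$ for $x\neq y$. Thus $Q^*_H$ inherits irreducibility from $Q$ on all of $\S$ (this is precisely the role of the energy function: it prevents the chain from becoming trapped off $\S_0$), and the rejections it performs supply the self-loops needed for aperiodicity. Standard finite-state Markov chain theory, the same Metropolis convergence already invoked for Proposition \ref{prop:metropolis}, then gives $(Q^*_H)^T(z,\cdot)\to\pi_H$ in total variation as $T\to\infty$ for every $z\in\S$. Irreducibility on a finite state space also guarantees that $\tau=\min\{t:b(X_t)=0\}$ is finite almost surely, so the output $X_{\tau+T}$ is well defined. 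Writing $\mu$ for the law of $X_\tau$, a probability measure supported on $\S_0$ whose precise form depends on $X_0$ but is irrelevant in the limit, the strong Markov property applied at the stopping time $\tau$ gives, for every $y\in\S$,
\[
\P(X_{\tau+T}=y)=\sum_{z\in\S_0}\P(X_\tau=z)\,(Q^*_H)^T(z,y)=\bigl(\mu\,(Q^*_H)^T\bigr)(y).
\]
Averaging the per-state convergence over $\mu$ then yields $\bigl\|\mu(Q^*_H)^T-\pi_H\bigr\|_{\mathrm{TV}}\le\sum_{z}\mu(z)\,\bigl\|(Q^*_H)^T(z,\cdot)-\pi_H\bigr\|_{\mathrm{TV}}\to0$, so the law of the output converges in total variation to $\pi_H$, whose restriction to $\S_0$ is uniform by the first step.

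The one point needing care is the strong Markov decomposition at the random time $\tau$: one must verify that conditioning on $X_\tau$ correctly resets the chain, so that the extra $T$ steps evolve as $Q^*_H$ independently of the path taken to reach $\S_0$, and that finiteness of $\tau$ (hence well-definedness of the output) follows from irreducibility. Everything else is a direct application of reversibility and the finite-state convergence theorem; notably, the argument is insensitive both to the initial sampler producing $X_0$ and to the hitting law $\mu$, both of which are washed out as $T\to\infty$.
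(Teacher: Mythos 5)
Your argument is correct and takes exactly the route the paper does: the paper's entire proof of this proposition is the one-line remark that convergence ``follows immediately from the Metropolis algorithm,'' and your write-up simply supplies the details that remark stands in for (detailed balance for $\pi_H(x)\propto e^{-H(b(x))}$, irreducibility of $Q^*_H$ from the strict positivity of the acceptance probabilities, and the strong Markov decomposition at $\tau$). The one point left at the same level of informality as the paper itself is aperiodicity --- if $H$ makes every acceptance probability equal to $1$ (e.g.\ $H$ constant) there are no rejection self-loops and one must additionally assume $Q$ is aperiodic --- but this caveat is inherited from the paper's own statement rather than introduced by your proof.
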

}%don't remove this end curly brace

However, for practical applications, our primary interest is in behaviour at finite times. In particular, we would like to know how long to run the algorithm for an output that is close to uniform. This will be demonstrated through a variety of motivating examples in Section \ref{sec:examples}, and our primary application, labelled graphs with given degree sequence, in Section \ref{sec:graph}.

Typically, it is fairly easy to bound the running time: bounding the probability of reducing the badness at each step yields a bound on the running time by a sum of independent geometric random variables with those probabilities of success. Uniformity is more difficult, but lends itself well to a coupling argument. The initial state $X_0$ contains all the symmetry provided by the sampling algorithm on $\S$, and will typically be close to uniform on $\S$. We can define a process $Y_t$, started at $Y_0=X_0$, whose restriction to $\S_0$ remains uniform at all times. Then, it suffices to show that $X_\tau$ remains coupled to $Y_\tau$ with high probability, and thus is close to uniform. This strategy of proof will be demonstrated in Sections \ref{sec:examples} and \ref{sec:graph}.

\subsection{Graphs with given degrees}

Section \ref{sec:graph} is devoted to the problem of sampling labelled graphs with given degree sequence, which is an important problem in the statistical study of graphs and networks, and for which existing algorithms are not always practical. See Section \ref{sec:graphintro} for a full review of the literature.

In this problem, $(d_1,\ldots,d_n)$ is a degree sequence (that is, a sequence of positive integers which prescribes the degree at each vertex of a labelled graph); $n$ denotes the number of vertices, $m=\frac12\sum_id_i$ denotes the number of edges, and we assume $d_1\le d_2\le\cdots\le d_n$. The goal is to sample uniformly from the set of graphs with this degree sequence. Most existing algorithms require a sparseness condition where $d_n$ is a small power of $n$ or $m$, while algorithms that do not require such a condition either produce an output of unproven uniformity or take unpractically long to run.

In Section \ref{sec:graphmain}, we will show that Algorithm A succeeds in at most $O(m\log m)$ steps under the assumption $d_n=o(m^{1/2})$, and the output is asymptotically uniform under the assumption $d_n=o(n^{1/4})$. By optimising the algorithm using the fact that we can know a priori which moves of the chain will have no effect on the distribution, we can reduce the running time to $O(d_n^2)$ under the same assumptions. Since it takes $O(m)$ steps to initialise the algorithm, the overall runtime is $O(m)$, which is substantially better than existing algorithms.

In Section \ref{sec:multistar}, we consider \textit{multi-star graphs}, a class of non-sparse degree sequences which is important in that it serves as a counterexample to almost every existing algorithm. We show that Algorithm A$'$, with $T=O(\log m)$ and no energy function, produces an asymptotically uniform output in $O(m^2)$ steps, which can again be optimised to $O(m)$. While we can only prove uniformity up to degree $d_n=o(m^{1/4})$ in the general case, the fact that the algorithm works well for this non-sparse example where most others fail suggests that the sparseness constraint may not be necessary.

\section{Motivating examples}
\label{sec:examples}
This section comprises simple examples that motivate the Expand and Contract idea, in preparation for the main application of graphs with given degrees in Section \ref{sec:graphmain}.

\subsection{Subsets of a given size}
Let $\S_0$ be the subsets of $[n]=\{1,2,\ldots,n\}$ with cardinality $k$, where $\epsilon n\le k\le(1-\epsilon)n$ as $n\rightarrow\infty$ for some $\epsilon>0$. The best existing algorithms for sampling from $\S_0$ \citep{knuth,pak} run in time $O(n)$. To use Expand and Contract, we need the following inputs:\\[-18pt]
\begin{enumerate}
\parskip 0cm
\item Superset: let $\S=2^{[n]}$ be all subsets of $[n]$, from which we can sample by including each element of $[n]$ independently with probability $k/n$. Note that this is uniform restricted to the subsets of a given cardinality, but is not uniform on all of $2^{[n]}$.
\item Partition: let $\S_i$ be the subsets of $[n]$ of cardinality $k\pm i$.
\item Markov chain: from a starting subset of $[n]$, pick an element of $[n]$ uniformly, and choose to include or exlude this element from the subset with probability $\frac12$ each.
\end{enumerate}

\noindent Then, the Expand and Contract algorithm is as follows:\\[-18pt]
\begin{enumerate}
\parskip 0cm
\item Pick a random subset of $[n]$ by including elements independently with probability $k/n$.
\item Pepeatedly pick an element of $[n]$ and include or exclude it with probability $\frac12$ each, rejecting moves which cause the cardinality to move further away from $k$.
\item When the cardinality reaches exactly $k$, stop and output the subset.
\end{enumerate}

\begin{proposition}
\label{prop:subsets}
This algorithm yields a uniformly random $k$-subset of $[n]$ in $O(n)$ steps.
\end{proposition}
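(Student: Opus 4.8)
The statement bundles two claims: that the output is \emph{exactly} uniform on $\S_0$, and that the algorithm runs in $O(n)$ steps. The first claim cannot be extracted from Proposition~\ref{prop:metropolis} alone, since that result only gives uniformity in the limit $t\to\infty$, whereas the algorithm halts at the random time $\tau$. The plan is therefore to prove uniformity directly by exploiting the permutation symmetry of the whole construction, and then to bound the running time by tracking the badness.

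For uniformity, I would maintain the invariant that, for every $t$, the conditional law of $X_t$ given $|X_t|=j$ is uniform over the $j$-subsets of $[n]$; equivalently, that the law of $X_t$ is invariant under the action of the symmetric group $S_n$ on $[n]$. The base case holds because initialisation includes each element independently with the same probability $k/n$, so a subset of size $j$ has probability $(k/n)^j(1-k/n)^{n-j}$, depending only on $j$. For the inductive step, observe that the proposal chain $Q$ is $S_n$-equivariant (it picks a uniformly random element and flips a fair coin), and that the badness $b(x)=\big||x|-k\big|$ depends only on $|x|$ and is thus $S_n$-invariant; hence the accept/reject rule, and so the modified chain $Q^*$, are $S_n$-equivariant, which preserves the invariant. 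Since $\tau=\min\{t:b(X_t)=0\}$ is determined by the $S_n$-invariant cardinality process, conditioning on $\{\tau=t\}$ does not break the symmetry, so the law of $X_\tau$ is $S_n$-invariant. As $X_\tau$ is supported on $\S_0$ and $S_n$ acts transitively on the $k$-subsets, an $S_n$-invariant law on $\S_0$ must be uniform, giving exact uniformity (provided $\tau<\infty$ almost surely, which follows from the running-time bound).

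For the running time, the key point is that badness is monotone non-increasing along the trajectory: a move is accepted only if it does not increase $\big||x|-k\big|$, and each accepted non-trivial move changes the cardinality by exactly $\pm1$, hence lowers the badness by exactly one. Thus $\tau$ is a sum of $b(X_0)$ independent geometric waiting times, the $i$th being the time to pass from badness $i$ to badness $i-1$. At cardinality $k+i$ the probability of an accepted (badness-decreasing) move is $\tfrac{k+i}{2n}$ (pick one of the $k+i$ present elements and elect to exclude it), and at cardinality $k-i$ it is $\tfrac{n-k+i}{2n}$; in both cases the hypothesis $\epsilon n\le k\le(1-\epsilon)n$ forces this probability to be at least $\epsilon/2$. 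Hence each waiting time is stochastically dominated by a $\mathrm{Geom}(\epsilon/2)$ variable, and $\E[\tau\mid X_0]\le (2/\epsilon)\,b(X_0)$.

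It remains to control $b(X_0)=\big||X_0|-k\big|$. Since $|X_0|$ is $\mathrm{Binomial}(n,k/n)$, it has mean $k$ and standard deviation $O(\sqrt n)$, so $\E\,b(X_0)=O(\sqrt n)$ and, by concentration, $b(X_0)=O(\sqrt n)$ with high probability; the same concentration makes $\tau=O(\sqrt n)$ with high probability. Thus the Markov-chain phase costs only $O(\sqrt n)$ steps, and the overall $O(n)$ bound is dominated by the $O(n)$ cost of the initial independent sampling. The main obstacle, and the point requiring the most care, is the uniformity claim: one must verify that stopping at the \emph{data-dependent} time $\tau$ does not bias the output, and the cleanest way to see this is the symmetry argument above, which works precisely because both the transition rule and the stopping rule depend on the state only through its cardinality.
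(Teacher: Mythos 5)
Your proposal is correct and follows essentially the same route as the paper: a sum-of-geometrics bound of $O(\sqrt n)$ on the contraction phase (the paper evaluates the harmonic sum $\sum 2n/i$ exactly where you use the cruder lower bound $\epsilon/2$ on each success probability, but both rely on $b(X_0)=O(\sqrt n)$ from the binomial initialisation), followed by exact uniformity via $S_n$-symmetry of the initial state and the chain. The only difference is one of care rather than substance: you explicitly verify that the data-dependent stopping time $\tau$ is measurable with respect to the $S_n$-invariant cardinality process and so does not bias the output, a point the paper compresses into the single phrase ``uniform by symmetry.''
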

\begin{proof}
Suppose the initial cardinality is $c$, and without loss of generality assume $c>k$. At each step with cardinality $i$, the cardinality decreases by 1 with probability $i/2n$, so the time to reach cardinality $k$ is bounded by a sum of geometric random variables with probability of success $i/2n$. Thus, the expected number of steps is
\[\frac{2n}{c}+\frac{2n}{c-1}+\cdots+\frac{2n}{k+1}=2n\big(\log c-\log k+O(\tfrac1k)\big).\]
Since $c=k+O(\sqrt n)$ and $k=\Theta(n)$, we have $c/k=1+O(n^{-1/2})$, hence the expected number of steps is $O(\sqrt n)$. Using Markov's inequality to bound the number of steps by its mean, we see that the runtime is dominated by the $O(n)$ steps to pick the initial subset. Finally, since both the initial state and the Markov chain are invariant under relabelling of points, the output is uniform by symmetry.
\end{proof}

\subsection{Permutations}
\label{sec:permutations}
Let $\S_0=S_n$ be the permutations of $n$. The best existing algorithms \citep{knuth,pak} are similar to the previous example, and run in $O(n)$ steps. By considering permutations as bijective functions $[n]\rightarrow[n]$, and expanding to the superset of all functions, we obtain the following Expand and Contract algorithm:\\[-18pt]
\begin{enumerate}
\parskip 0cm
\item Pick a uniformly random function $[n]\rightarrow[n]$.
\item Repeatedly pick a uniformly random element of $[n]$ and map it to another uniformly random element of $[n]$, rejecting moves which reduce the cardinality of the range.
\item Output the function when it is surjective.
\end{enumerate}
If the range has cardinality $n-i$, then the probability of increasing the cardinality is least $i^2/n^2$. As in the previous example, we can bound by a sum of geometric random variables to obtain a runtime of $O(n^2)$, with the output uniform by symmetry.

However, note that there are two layers of symmetry---in both the domain and the codomain---but we only need one for uniformity. Instead of picking points in the domain uniformly, we can maintain a list of ``bad points'' whose values already occur in the range, and always choose from these. The symmetry in the codomain is unaffected, so the output is still uniformly random, but the runtime improves substantially to $O(n\log n)$. Optimisations like this will be important for the graphs with given degrees application in Section \ref{sec:graphmain}.

\subsection{General linear group}
Let $\S_0=GL_n(\mathbb F_q)$ be the invertible $n\times n$ matrices over a finite field $\mathbb F_q$. The best existing algorithms \citep{shahshahani,pak} for sampling from $\S_0$ run in time $O(n^3)$. We can Expand and Contract via the superset of all matrices as follows:\\[-18pt]
\begin{enumerate}
\parskip 0cm
\item Pick $n$ random vectors in $\mathbb F_q^n$ as the columns of a $n\times n$ matrix.
\item Repeatedly pick a random column and replace it with a random vector in $\mathbb F_q^n$, rejecting moves that decrease the rank of the matrix.
\item Output the matrix when it reaches full rank.
\end{enumerate}
Similarly to the permutations example, there are two layers of symmetry---in both the rows and the columns---but only one is required. Instead of picking a random column, we can always choose a column which is a linear combination of other columns, which can be determined from the reduced row echelon form. Under this optimisation, both the initial state and the transition probabilities of the Markov chain depend only on the linear dependences between columns, which are invariant under Gaussian elimination. Since all invertible matrices are in the same orbit under Gaussian elimination, the output remains exactly uniform.

The initial rank is $n-O(1)$ with high probability, and we pick a vector outside the span of the existing columns with probability $\Theta(1)$, so the algorithm takes $O(1)$ steps. Since each step requires finding the reduced row echelon form, the overall runtime is $O(n^3)$.

\subsection{Lattice points in a sphere}
Let $\S_0$ be the set of tuples $(a_1,\ldots,a_n)$ of non-negative integers with $a_1^2+\cdots+a_n^2=E$. This has the quantum mechanical interpretation of configurations of $n$ non-interacting bosons in an infinite one-dimensional well, with a given energy $E$, which we consider to grow as $E=Cn$ for some constant $C$. Physical background for this model, as well as analytical results, can be found in the upcoming paper of \cite{chatterjee2}. We can Expand and Contract via the superset of all tuples of non-negative integers as follows:\\[-18pt]
\begin{enumerate}
\parskip 0cm
\item Pick each component of an $n$-tuple independently from a geometric distribution $G$ with probability of success chosen so that $\E[G^2]=C$. This results in an overall energy close to $E$, with the choice of geometric distribution corresponding to the physical fact that the energies of the individual particles should follow the Boltzmann distribution.
\item Repeatedly pick a component at random and either add or subtract 1 with equal probability, rejecting moves where the energy gets further away from $E$, or moves that introduce a negative component.
\item Output the tuple when the energy reaches $E$.
\end{enumerate}

Again, by bounding the probability of reducing the badness, the runtime can be shown to be $O(n)$. However, it is difficult to say anything about the uniformity of the output. Simulations suggest it is fairly close to uniform for small $n$, but the exponential growth of the state space makes it difficult to extend this observation to larger values. This example shows the typical case---runtime bounds are often quite easy to prove even for fairly complicated examples, while uniformity is usually difficult to determine.

\subsection{Magic squares}
Let $\S_0$ be the set ot $n\times n$ magic squares, that is, the set of $n\times n$ matrices whose entries are a permutation of $\{1,\ldots,n^2\}$ and whose row, column and diagonal sums are all equal to $\frac12n(n^2+1)$. Despite being a famous combinatorial family which has been studied since antiquity, the existing literature contains little discussion of uniform sampling. \cite{ardell} showed how to generate a magic square using a genetic algorithm, while \cite{kitajima} used a technique similar to ours to estimate the number of magic squares up to size $n\le30$, but neither result bounds the running time or the distance to uniformity. We can use Expand and Contract as follows:\\[-18pt]
\begin{enumerate}
\parskip 0cm
\item Pick a random $n\times n$ matrix whose entries are a permutation of $\{1,\ldots,n^2\}$.
\item Define the ``badness'' as the $L^1$ difference between the row, column and diagonal sums from the desired value of $\frac12n(n^2+1)$. Repeatedly swap two random entries\vspace{-.03cm}, rejecting moves which increase badness from $b$ to $c$ with probability $\alpha^{c^2-b^2}$. We choose a quadratic energy function since getting stuck is more likely when the badness is already small, and $\alpha=0.937$ was determined by minimising the average empirical runtime.
\item Stop and output when a magic square is obtained.
\end{enumerate}

We are unable to prove any results about either the running time or the uniformity of the algorithm. However, it does seem to work reasonably well in practice, demonstrating the robustness of the Expand and Contract strategy in being able to write down a plausible algorithm for almost any combinatorial family. For the 7040 magic squares of size $4\times4$, the empirical distribution was distance 0.130 from uniform in total variance, and the empirical probability of any magic square was within a factor of 2.196 from uniform. For larger squares, the extremely large state space prevented a similar simulation, but the algorithm nonetheless succeeded in producing magic squares up to size $50\times 50$, after which the runtime became prohibitive on an \textit{AMD Phenom II X4 765} processor.

\section{Graphs with given degree sequence}
\label{sec:graph}
\subsection{Introduction}
\label{sec:graphintro}
The problem of sampling labelled graphs with given degree sequence arises in the study of statistics on networks. For example, \cite{grossman} calculated that the mean Erd\H os number (among mathematicians who have a finite Erd\H os number) is 4.65. This leads to the following question: is this a surprising sociological consequence of mathematicians' publishing habits, or is it a purely graph theoretical phenomenon? In order to compare the observed real-world collaboration graph to ``typical'' ones, we want to sample labelled graphs (since mathematicians are distinguishable) with given degree sequence (since changing Erd\H os' propensity to collaborate would substantially alter the question).

Other interesting applications include Darwin's finches (Charles Darwin observed that the species of finches on the Gal\'apagos Islands tend to have differently shaped beaks when they coexist on the same island, suggesting that competition for the same food sources drove some species to extinction; see \cite{chen} for a statistical analysis of the bipartite graph connecting finches and islands), the six degrees of separation phenomenon (social networks tend to have surprisingly low distances between random people, but is this a sociological or a graph theoretical phenomenon?), and network motifs (the study of whether graphs have a statistically significant number of certain subgraphs, such as positive feedback loops in directed graphs of biochemical pathways).

The problem of sampling graphs with given degrees began as an afterthought to the study of their enumeration. A classical sampling algorithm via perfect matchings is credited to \cite{bender} and \cite{bollobas}, and produces a multigraph (where loops and multiple edges are allowed) with the given degree sequence.

For the remainder of this chapter, let $d=(d_1,\ldots,d_n)$ refer to a degree sequence, ascendingly ordered by $d_1\le d_2\le\cdots\le d_n$, and assumed to be valid in the sense that there exists at least one simple graph with that degree sequence. Also let $n$ denote the number of vertices, and $m=\frac12\sum_id_i$ denote the number of edges.

\vspace{10pt}
\algorithm{B}{
Classical sampling algorithm for multigraphs with given degrees.\\[-18pt]
\begin{enumerate}
\parskip 0cm
\item[(1)] Divide the first vertex into $d_1$ ``edge-endpoints'', the second vertex into $d_2$ edge-endpoints, and so on, until one obtains $\sum_id_i=2m$ edge-endpoints in $n$ groups.
\item[(2)] Choose a random matching of these $2m$ edge-endpoints, by picking edges between pairs of unmatched edge-endpoints at random until all are matched.
\item[(3)] Collapse the groups of edge-endpoints back into vertices to obtain a labelled multigraph with the given degree sequence.
\vspace{-10pt}
\end{enumerate}
}\vspace{4pt}

An easy calculation shows that if one repeats Algorithm B and waits until a simple graph is obtained, then the output is exactly uniform on simple graphs \citep{bayati}. Indeed, the earliest results on uniform sampling \citep{wormald} proved that if the degrees are bounded, then there is a bounded positive probability of obtaining such a simple graph, showing that it works reasonably well for bounded degree sequences.

\begin{figure}[H]
\centering
\parbox{.8\textwidth}{
\hfill
\includegraphics[width=2cm,angle=90]{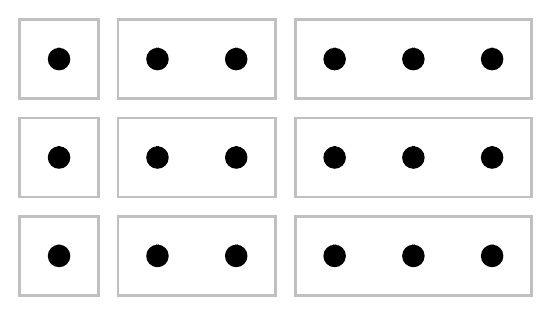}
\hfill\hfill
\includegraphics[width=2cm,angle=90]{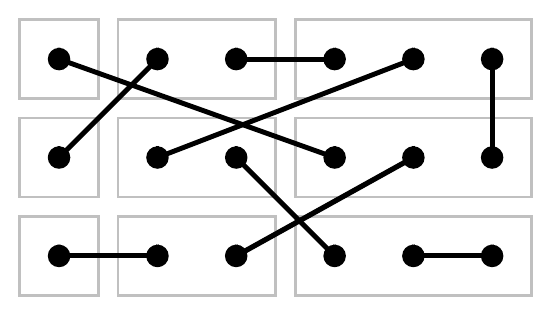}
\hfill\hfill
\includegraphics[width=2cm,angle=90]{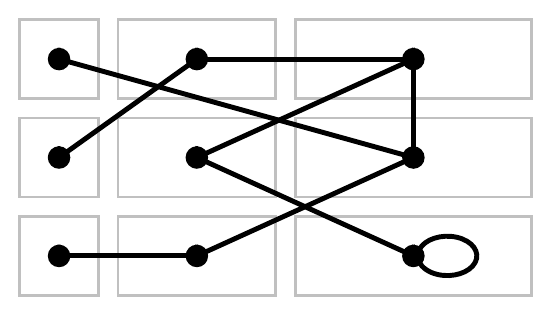}
\hfill\hfill

\caption[Sampling labelled multigraphs with given degree sequence]{
Example for Algorithm B: sampling labelled multigraphs with given degree sequence $(1,1,1,2,2,2,3,3,3)$. From left to right: (1) dividing the vertices into groups of edge-endpoints; (2) randomly matching pairs of edge-endpoints; and (3) collapsing groups of edge-endpoints back into vertices. Note that this realisation contains a loop, so the output is not a simple graph.
}
}
\end{figure}

Many subsequent papers attempted to modify Algorithm B to work for less sparse degree sequences. \cite{mckaywormald} showed that as long as the maximum degree is $d_n=O(m^{1/4})$, there is a way to remove loops and multiple edges to obtain a uniformly random simple graph, at the cost of additional computations to determine how these loops and multiple edges can be removed, which results in a relatively slow running time of $O(m^2d_n^2)$, which is quite costly in practice. \cite{steger} found an alternative modification---dynamically deciding to accept or reject the addition of edges with a certain probability---that allows a much faster running time of $O(nd_n^2)$, which works for $d_n=o(n^{1/11}(\log n)^{-3/11})$. \cite{bayati} improved upon this idea with a smarter rejection criterion, which was able to greatly expand the sparseness condition to $d_n=O(m^{1/4-\epsilon})$ for any $\epsilon>0$.

The algorithm of \cite{blitzstein} also follows this pattern of choosing the edges sequentially, but instead matches edges starting from the highest degree vertices in a way that ensures the algorithm always terminates. This avoids the possibility of getting stuck at the cost of not being able to determine the uniformity of the result, but the amount of non-uniformity can be explicitly computed, allowing for importance sampling techniques to yield unbiased estimates of statistical quantities. \cite{chen} took an alternative approach in the case of bipartite graphs by picking all the edges for a whole vertex at a time. \cite{blanchet} showed that for this algorithm, if the two vertex sets have maximum degree $o(m^{1/2})$ and $O(1)$, the output is asymptotically uniform with runtime $O(n^3)$, and again importance sampling can be used to adjust for non-uniformity in other cases. However, despite producing unbiased estimates, importance sampling does not necessarily work well when the distribution is exponentially far from uniform, so closeness to uniform is still an important question for these algorithms.

\cite{bezakova2} found an explicit counterexample demonstrating that the algorithm of \cite{chen} can be exponentially far from uniform. This counterexample is important, as it applies to all of the algorithms discussed thus far: all of them experience exponentially bad behaviour, in either runtime or uniformity or both. In Section \ref{sec:multistar}, we expand this counterexample to a larger class of degree sequences that nonetheless still causes exponentially bad behaviour---we call them \textit{multi-star graphs}---and show that our new strategy works much better in this case.

An alternate school of thought \citep{jerrum1,jerrum2,jerrum3,bezakova} is to apply Markov Chain Monte Carlo: start at an instance of a simple labelled graph with the given degree sequence, and adjust it slightly so that the degree sequence does not change, repeating until a new, almost-independent sample is reached. This is proven to run in polynomial time, but unfortunately the polynomial bound---$O(n^4m^3d_n(\log n)^5)$---is difficult to use in practice, particularly since Markov Chain Monte Carlo does not simply terminate, but instead requires us to prescribe a number of steps to run.

Our Expand and Contract algorithm is a hybrid of the classical perfect matching algorithm with MCMC techniques, and is most similar to the algorithm of \cite{mckaywormald} in that we start with a multigraph and obtain a simple graph by removing loops and multiple edges. The key difference is that we use a Markov chain to remove these bad edges without requiring any precomputation, which allows for both a more general framework and a greatly reduced running time of $O(m)$. This runtime can be broken down into $O(m)$ steps for Algorithm B plus $O(\sqrt m)$ steps to remove the bad edges while maintaining asymptotic uniformity, which is asymptotically best possible in the sense that Algorithm B can be implemented so that it comprises exactly $m$ edge-picking steps with $O(1)$ overhead, and those $m$ edge-picking steps are necessary to generate a graph with $m$ edges.

A comparison of algorithms follows in Figure \ref{fig:algorithms}.

\vspace{6pt}
\begin{figure}[H]
\centering
\parbox{.8\textwidth}{
\hfill\begin{tabular}{|l|l|l|}
\hline
Algorithm & Runtime & Sparseness\\
\hline
Perfect Matching & $O(m)$ & $d_n=O(1)$ \\
McKay-Wormald & $O(m^2d_n^2)$ & $d_n=O(m^{1/4})$ \\
Steger-Wormald & $O(nd_n^2)$ & $d_n=o(n^{1/11})$ \\
Bayati-Kim-Saberi & $O(md_n)$ & $d_n=o(m^{1/4})$ \\
Blitzstein-Diaconis & $O(n^2m)$ & Importance\footnotemark[1] \\
Chen-Diaconis-Holmes-Liu & $O(n^3)$ & Importance\footnotemark[1] \\
Jerrum-Sinclair & Polynomial\footnotemark[2] & $d_n\le\sqrt{n/2}$ \\
Bez\'akov\'a-Bhatnagar-Vigoda & $O(n^4m^3d_n)$ & All\footnotemark[3] \\
Expand and Contract & $O(m)$ & $d_n=o(m^{1/4})$ \\
\hline
\end{tabular}\hfill\hfill

\caption[Comparison of algorithms for graphs with given degree sequence]{\label{fig:algorithms}
\parindent 1em
Comparison of sampling algorithms for graphs with given degree sequence. Runtime and sparsity constraints are listed with sub-polynomial terms omitted for succinctness. Expand and Contract is the fastest algorithm, and applies to the equal largest range of inputs among non-MCMC algorithms.

\vspace{8pt}
Notes: 1. Blitzstein-Diaconis and Chen-Diaconis-Holmes-Liu rely on importance sampling to overcome known non-uniformity of the output;~ 2. Jerrum-Sinclair showed their algorithm runs in polynomial time without explicitly computing it;~ 3. Bez\'akov\'a-Bhatnagar-Vigoda is asymptotically uniform for all degree sequences.
}
}

\end{figure}

\subsection{Expand and Contract}
\label{sec:graphmain}
Since Algorithm B provides an easy way to generate a random labelled multigraph with the given degree sequence, this problem fits naturally into our Expand and Contract strategy. The number of \textbf{bad edges}---loops and multiple edges, counting multiplicity---defines a natural graded partition of our expanded state space. Thus, the only remaining ingredient we need to use Algorithm A is a Markov chain on multigraphs. However, before we introduce the Markov chain, we will first prove the following bound on the initial number of bad edges, which is a minor refinement of the bound of \cite{mckaywormald}.

\begin{lemma}
\label{lem:badedges}
If $d_n=o(\sqrt m)$, then the number of bad edges in the output of Algorithm B is $b=O(d_n^2)$ with high probability, in the sense that there is $C<\infty$ such that $\P[b>Cd_n^2]\rightarrow0$.
\end{lemma}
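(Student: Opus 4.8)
The plan is to count bad edges in the random matching produced by Algorithm B. A bad edge is either a loop (both endpoints of an edge land in the same vertex group) or a multiple edge (two or more edges connect the same pair of vertex groups). I would bound the expected number of each type separately, then use a concentration or first-moment argument to pass from the expectation to a high-probability bound. Since we only need $b = O(d_n^2)$ with high probability, controlling the first moment and applying Markov's inequality to $b$ should suffice, provided $\E[b] = O(d_n^2)$.

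First I would estimate the expected number of loops. A loop at vertex $i$ arises when two of its $d_i$ edge-endpoints are matched to each other. The probability that two specified endpoints are matched is roughly $1/(2m-1)$, and vertex $i$ has $\binom{d_i}{2}$ such pairs, so the expected number of loops is approximately $\sum_i \binom{d_i}{2}/(2m-1) \approx \frac{1}{4m}\sum_i d_i^2$. Since $\sum_i d_i^2 \le d_n \sum_i d_i = 2m\,d_n$, this is $O(d_n)$, which is comfortably $O(d_n^2)$. Next I would estimate the expected number of multiple edges: a double edge between vertices $i$ and $j$ requires two pairs of endpoints (one from each vertex) to be matched together, contributing a probability of order $1/(2m)^2$ times the number of ways to choose the endpoints, $\binom{d_i}{2}\binom{d_j}{2}$ combinatorial factors for the pairings. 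Summing over $i<j$ gives an expected count of order $\frac{1}{m^2}\left(\sum_i d_i^2\right)^2 = O(d_n^2)$ after using $\sum_i d_i^2 = O(m\,d_n)$ and $m = \Theta(m)$. Higher-multiplicity edges contribute lower-order terms under the sparsity hypothesis $d_n = o(\sqrt m)$, since each additional edge multiplicity costs another factor of order $d_n^2/m = o(1)$.

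Having shown $\E[b] = O(d_n^2)$, I would then apply Markov's inequality directly: $\P[b > C d_n^2] \le \E[b]/(C d_n^2) \to 0$ is not quite right since $\E[b]/(Cd_n^2)$ is bounded but not vanishing, so instead I would phrase the high-probability statement as choosing $C$ large: $\P[b > C d_n^2] \le \E[b]/(C d_n^2) \le K/C$ for a constant $K$, which can be made arbitrarily small, matching the stated conclusion that there exists $C$ with $\P[b > C d_n^2] \to 0$. Alternatively, since the loop and multiple-edge counts are sums of weakly dependent indicators, a second-moment computation would give genuine concentration, but the first-moment route is cleaner and adequate here.

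The main obstacle will be handling the dependence between edges in the matching, since the endpoints are matched sequentially and without replacement, so the events "endpoints $a,b$ are matched" are not independent across different pairs. I expect the cleanest way around this is to work with the exact matching probabilities: the probability that a prescribed set of $\ell$ disjoint pairs all appear as edges in a uniform perfect matching of $2m$ points is exactly $(2m-2\ell-1)!!/(2m-1)!! = \prod_{j=0}^{\ell-1}(2m-2j-1)^{-1}$, which is $(1+o(1))(2m)^{-\ell}$ under the sparsity hypothesis. Using these exact falling-factorial probabilities rather than naive independence bounds keeps the estimates honest, and the hypothesis $d_n = o(\sqrt m)$ is precisely what ensures that the correction factors stay bounded and that multiplicities beyond two are negligible.
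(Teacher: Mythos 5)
Your moment calculations are sound and follow the same route as the paper: exact falling-factorial probabilities for prescribed partial matchings (the identity $(2m-2\ell-1)!!/(2m-1)!!\simeq(2m)^{-\ell}$ is exactly the paper's estimate (\ref{eqn:prob})), an $O(d_n)$ count for loops, an $O(d_n^2)$ count for double edges, and negligible higher multiplicities under $d_n=o(\sqrt m)$. The gap is in the final step, and you half-noticed it yourself: Markov's inequality gives $\P[b>Cd_n^2]\le K/C$, a bound that is small for large $C$ but does not tend to zero as $m\rightarrow\infty$ for any fixed $C$, whereas the lemma asserts the existence of a single finite $C$ with $\P[b>Cd_n^2]\rightarrow0$. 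Since $\E[b]$ is genuinely of order $d_n^2$ (for a $d$-regular sequence the double-edge term alone contributes $\Theta(d_n^2)$), you cannot make the Markov bound vanish by adjusting $C$; ``arbitrarily small for large $C$'' is strictly weaker than ``vanishing for some fixed $C$,'' and the weaker form would not support the later ``with high probability'' claims that rest on this lemma (e.g.\ in Theorem \ref{thm:sqrt}).

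The second-moment computation you set aside as optional is therefore the essential content of the paper's proof, not a refinement of it. Writing $b=\sum_\ell E_\ell$ with $\ell$ ranging over loop indices $L=\{(i,i,k):k\ge1\}$ and multi-edge indices $M=\{(i,j,k):i<j,\,k\ge2\}$, the paper bounds $\E[E_\ell E_{\ell'}]$ case by case (both indices in $L$, one in each, both in $M$; coinciding versus distinct vertex pairs) using the same falling-factorial estimate applied to the union of the two prescribed partial matchings, and obtains $\mathrm{Var}(b)=O(d_n^3)=o(d_n^4)$, whence Chebyshev's inequality gives concentration of $b$ around its mean at scale $o(d_n^2)$ and the stated conclusion with any $C$ exceeding the constant in $\E[b]=O(d_n^2)$. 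The dependence between indicators that you flag as ``the main obstacle'' is precisely what this case analysis controls; the work is routine but cannot be skipped.
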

\begin{proof}
We can write
\[b=\sum_i\sum_{k\ge 1}E_{i,i,k}+\sum_{i<j}\sum_{k\ge2}E_{i,j,k},\]
where $E_{i,j,k}$ is the indicator function of the multiplicity of edge $(i,j)$ being at least $k$. By Chebychev's inequality, it suffices to show that $\E[b]=O(d_n^2)$ and ${\mathrm{Var}}(b)=o(d_n^4)$.

Since we will be making asymptotic estimates of many quantities, for convenience, let $A\simeq B$ denote $A=(1+o(1))B$ as $n\rightarrow\infty$, and similarly let $A\lesssim B$ denote $A\le(1+o(1))B$ as $n\rightarrow\infty$.

Observe that the total number of possible perfect matchings between $2m$ edge-endpoints is $(2m!)/2^mm!$. Thus, the number of matchings containing any specified set of $k=o(\sqrt m)$ matches is
\begin{equation}
\label{eqn:prob}
\left.\frac{(2m-2k)!}{2^{m-k}(m-k)!}\middle/\frac{(2m)!}{2^mm!}\right.=\frac{2^km(m-1)\cdots(m-k+1)}{2m(2m-1)\cdots(2m-2k+1)}\simeq(2m)^{-k}.
\end{equation}
There are ${d_i\choose 2k}\frac{(2k)!}{2^kk!}$ matchings that create at least a $k$-tuple loop at $i$, hence
\begin{equation}
\label{eqn:loopbound}
\P[E_{i,i,k}=1]\simeq{d_i\choose 2k}\frac{(2k)!}{2^kk!}(2m)^{-k}\le\frac{d_i^{2k}}{(4m)^kk!}.
\end{equation}
Then, for $d_n=o(\sqrt m)$,
\[\E\left[\sum_i\sum_{k\ge 1}E_{i,i,k}\right]\lesssim\sum_{k\ge1}\frac{d_n^{2k-1}\sum_id_i}{(4m)^kk!}=\frac m{d_n}\left(\exp\left(\frac{d_n^2}{4m}\right)-1\right)\simeq\frac{d_n}4.\]
Similarly, there are ${d_i\choose k}{d_j\choose k}k!$ matchings that create at least a $k$-tuple edge at $(i,j)$ with $i\ne j$, hence
\begin{equation}
\label{eqn:medgebound}
\P[E_{i,j,k}=1]\simeq{d_i\choose k}{d_j\choose k}k!(2m)^{-k}\le\frac{d_i^kd_j^k}{(2m)^kk!}.
\end{equation}
Then, for $d_n=o(\sqrt m)$,
\[\E\left[\sum_{i<j}\sum_{k\ge2}E_{i,j,k}\right]\lesssim\frac12\sum_{k\ge2}\frac{\left(d_n^{k-1}\sum_id_i\right)^2}{(2m)^kk!}=\frac12\frac{m^2}{d_n^2}\left(\exp\left(\frac{d_n^2}{2m}\right)-1-\frac{d_n^2}{2m}\right)\simeq\frac{d_n^2}{16}.\]
Thus, $\E[b]=O(d_n^2)$, and it only remains to bound the variance. Write $b=\sum_\ell E_\ell$, where the sum in $\ell$ runs across the union of $L=\{(i,i,k):k\ge 1\}$ and $M=\{(i,j,k):i<j,k\ge2\}$, so that $\E[b^2]=\sum_{\ell,\ell'}\E[E_\ell E_{\ell'}]$.

We can split this sum depending on whether $\ell$ and $\ell'$ are in $L$ or $M$, and further split depending on whether the indices are equal. For $\ell\in L$ and $\ell'\in L$, if the indices are unequal, the same calculation in (\ref{eqn:loopbound}) yields that bound squared, while if they are equal, (\ref{eqn:loopbound}) holds but with $\max(k,k')$ instead:
\begin{align*}
\sum_{\ell\in L}\sum_{\ell'\in L}\E[E_\ell E_{\ell'}]
&\lesssim\sum_{i\ne i'}\sum_{k\ge1,k'\ge1}\frac{d_i^{2k}}{(4m)^kk!}\frac{d_{i'}^{2k'}}{(4m)^{k'}k'!}
+\sum_i\sum_{k\ge1,k'\ge 1}\frac{d_i^{2\max(k,k')}}{(4m)^{\max(k,k')}\max(k,k')!}\\
&\lesssim\left(\frac{d_n}4\right)^2+\sum_{k\ge1}k\frac{d_n^{2k-1}m}{(4m)^kk!}\simeq\frac{d_n^2}{16}+\frac{d_n}4.
\end{align*}
For $\ell\in L$ and $\ell'\in M$, we obtain the same bound of (\ref{eqn:loopbound}) times (\ref{eqn:medgebound}) regardless of whether $i=i'$ or $i=j'$:
\begin{align*}
\sum_{\ell\in L}\sum_{\ell'\in M}\E[E_\ell E_{\ell'}]
&\lesssim\frac12\sum_{i,i',j'}\sum_{k\ge1,k'\ge2}\frac{d_i^{2k}}{(4m)^kk!}\frac{d_{i'}^{k'}d_{j'}^{k'}}{(2m)^{k'}k'!}\lesssim\frac{d_n}4\frac{d_n^2}{16}=\frac{d_n^3}{64}.
\end{align*}
For $\ell\in M$ and $\ell'\in M$, we obtain the bound (\ref{eqn:medgebound}) squared unless $i=i'$ and $j=j'$, in which case we obtain (\ref{eqn:medgebound}) with $\max(k,k')$ instead:
\begin{align*}
\sum_{\ell\in M}\sum_{\ell'\in M}\E[E_\ell E_{\ell'}]
&\lesssim\frac14\sum_{i,j,i',j'}\sum_{k\ge2,k'\ge2}\frac{d_i^kd_j^k}{(4m)^kk!}\frac{d_{i'}^{k'}d_{j'}^{k'}}{(2m)^{k'}k'!}+{}\\
&\hspace*{1cm}\frac14\sum_{i,j}\sum_{k\ge2,k'\ge2}\frac{d_i^{\max(k,k')}d_j^{\max(k,k')}}{(4m)^{\max(k,k')}\max(k,k')!}\\
&\lesssim\E[b]^2+\frac14\sum_{k\ge2}k\frac{d_n^{2k-2}m^2}{(4m)^kk!}\simeq\E[b]^2+\frac{d_n^2}{64}.
\end{align*}
Hence, ${\mathrm{Var}}(b)=O(d_n^3)$, which shows concentration around the mean.
\end{proof}

There are many possible Markov chains that leave the degree sequence invariant. The simplest possibility is the \textbf{2-swap}: randomly remove two edges $(a,b)$ and $(c,d)$, and replace them with $(b,c)$ and $(d,a)$. This simple chain is sufficient for the special case in Section \ref{sec:multistar}, but not for the general case (see Remark \ref{rem:3swap}), where we will need the second simplest possibility, the \textbf{3-swap}: randomly remove three edges $(a,b)$, $(c,d)$ and $(e,f)$, and replace them with $(b,c)$, $(d,e)$ and $(f,a)$.

\begin{figure}[H]
\centering
\parbox{.8\textwidth}{
\hfill
\includegraphics[width=5cm]{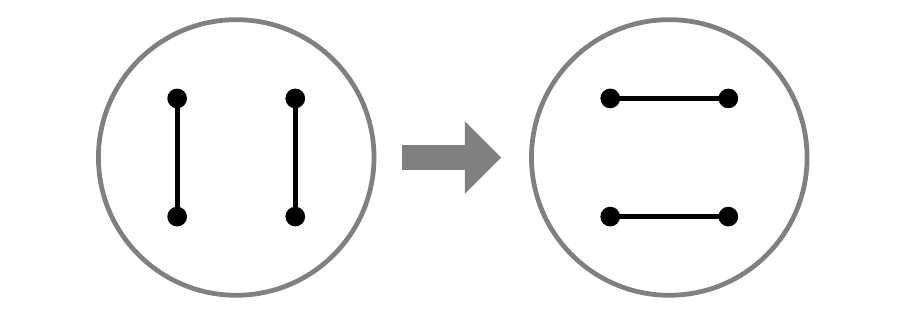}
\hfill
\includegraphics[width=5cm]{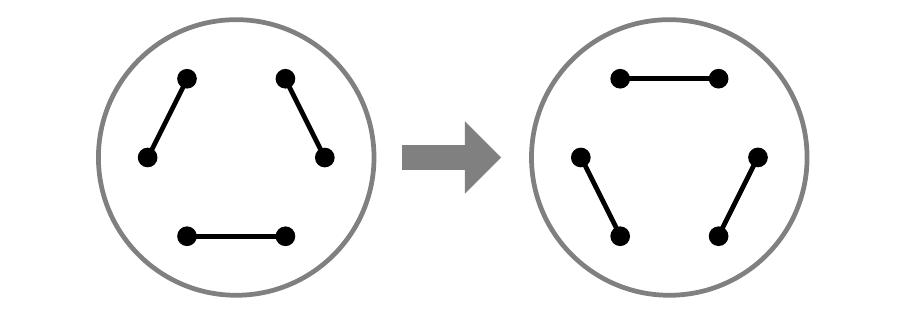}
\hfill\hfill

\caption[Markov chains for Algorithm A]{
The two simplest possible choices of Markov chain for use with Algorithm A: the 2-swap (left) and the 3-swap (right). In both cases, a cycle of vertex pairs consisting of alternating edges and non-edges has its edges and non-edges swapped.
}
}
\end{figure}

These moves were introduced by \cite{mckaywormald}, whose algorithm used 3-swaps to remove loops, and double 2-swaps to remove double edges. Our improvement is in using a simple Markov chain to decide which move to make, so as to avoid the quadratic-time precomputations that are necessary in the McKay-Wormald algorithm. An alternative chain is that of \cite{jerrum1,jerrum2}, which expands the state space to near-perfect matchings by allowing up to two edge-endpoints to remain unmatched.

\begin{figure}[H]
\centering
\parbox{.8\textwidth}{
\hfill
\includegraphics[width=5cm]{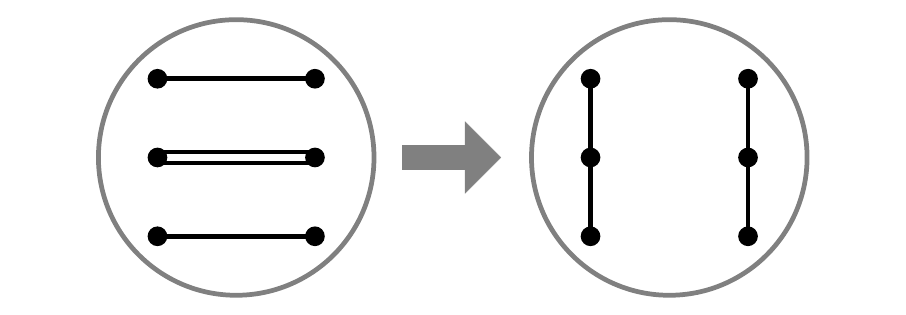}
\hfill
\includegraphics[width=5cm]{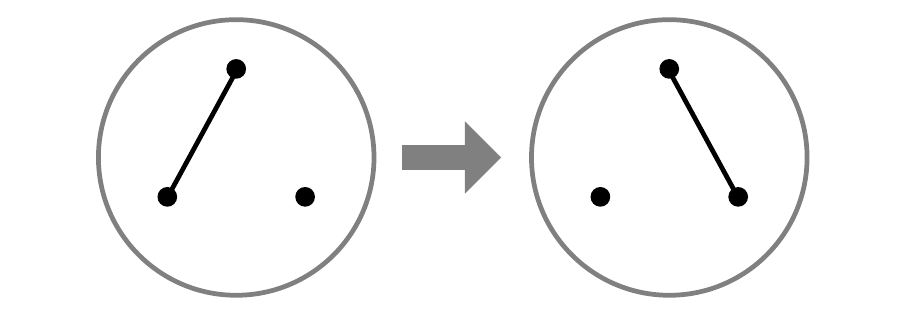}
\hfill\hfill

\caption[More markov chains for Algorithm A]{
Markov chains used in the literature for Algorithm A: \cites{mckay} double 2-swap (left), which is only used for double edges, and \cites{jerrum1} near-perfect matching chain (right), which allows up to two edge-endpoints to remain unmatched so that single edges can be moved around for faster mixing. Both of these chains can be used for Expand and Contract, but since we are able to obtain good results using the much simpler 3-swap, we will leave the discussion of which chain to use for future work.
}
}
\end{figure}

\algorithm{A}{Expand and Contract for sampling labelled graphs with given degree sequence $(d_1,\ldots,d_n)$.\\[4pt]
Inputs:\\[-18pt]
\begin{enumerate}
\parskip 0cm
\item Superset: let $\S$ be the set of multigraphs with the given degree sequence.
\item Grading: let $\S_i$ be the subset of multigraphs with $i$ bad edges.
\item Markov chain: random 3-swaps.
\end{enumerate}
\vspace{-.2cm}
Algorithm:\\[-18pt]
\begin{enumerate}
\parskip 0cm
\item[(1)] Run Algorithm B to obtain a random multigraph.
\item[(2)] Pick three edges $\{(a,b),(c,d),(e,f)\}$ randomly (also pick the vertex order randomly), and replace them with $\{(b,c),(b,e),(f,a)\}$, rejecting moves that increase the number of bad edges.
\item[(3)] Repeat until a simple graph is obtained.
\vspace{-10pt}
\end{enumerate}
}

\begin{theorem}
\label{thm:sqrt}
For graphs with given degree sequence with $n$ vertices, $m$ edges and maximum degree $d_n=o(\sqrt m)$, Algorithm A terminates in at most $(1+o(1))\frac23m\log d_n$ steps with high probability.
\end{theorem}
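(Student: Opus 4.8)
The plan is to combine the bound on the initial badness from Lemma~\ref{lem:badedges} with a uniform lower bound on the per-step probability of reducing the badness, and then sum the resulting geometric waiting times.

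By Lemma~\ref{lem:badedges}, the multigraph produced in step~(1) has badness $b\le Cd_n^2$ with high probability for some constant $C$; condition on this event. Since step~(2) rejects every move that increases badness, the badness is non-increasing along the run, so the total number of steps is $\tau=\sum_{j\ge1}T_j$, where $T_j$ is the number of steps spent at badness exactly $j$. If I can show that from \emph{every} configuration of badness $j$ a single $3$-swap reduces the badness with probability at least some $p_j$, then each $T_j$ is stochastically dominated by a geometric variable of parameter $p_j$, and a step-by-step coupling dominates $\tau$ by a sum of \emph{independent} geometrics with these parameters (a jump of badness by more than one only helps, as it empties some $T_j$).

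The crux is the claim $p_j\ge(1+o(1))\tfrac{3j}{m}$, uniform over configurations. Fix a configuration of badness $j$ and one of its $j$ bad edges $e$. A $3$-swap deletes all three of its selected edges and rewires their endpoints, so $e$ is deleted whenever it is selected, regardless of the rewiring pattern; and the probability that $e$ is among the three selected edges is $\binom{m-1}{2}/\binom{m}{3}=\tfrac{3}{m}$. Given that $e$ is selected, deleting it lowers the badness by one \emph{unless} the two further edges are chosen so that some of the six endpoints collide or a newly created edge duplicates an existing one. Each such coincidence has probability $O(d_n^2/m)$, since every vertex has at most $d_n$ neighbours while $\sum_w\deg(w)=2m$; this is exactly where $d_n=o(\sqrt m)$ is used, as it forces $d_n^2/m=o(1)$, so a $1-o(1)$ fraction of completions are badness-reducing. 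Summing over the $j$ bad edges and subtracting the $O(j^2/m^2)$ chance that two bad edges are selected together (negligible because $j\le Cd_n^2$ gives $j^2/m^2=o(j/m)$) yields $p_j\ge(1+o(1))\tfrac{3j}{m}$.

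The remainder is routine. Dominating $\tau$ by $\sum_{j=1}^{b}G_j$ with independent $G_j\sim\mathrm{Geom}\big((1-o(1))\tfrac{3j}{m}\big)$ gives $\E[\tau]\le(1+o(1))\tfrac{m}{3}\sum_{j=1}^{b}\tfrac1j\le(1+o(1))\tfrac{m}{3}\log b\le(1+o(1))\tfrac{2}{3}m\log d_n$, using $b\le Cd_n^2$ and $\log(Cd_n^2)=(1+o(1))2\log d_n$ (I take $d_n\to\infty$; a bounded maximum degree gives an $O(m)$ bound directly). The variance $\sum_j(1-p_j)/p_j^2=O\big(m^2\sum_j j^{-2}\big)=O(m^2)$ has square root $O(m)$, of smaller order than the mean $\Theta(m\log d_n)$, so Chebyshev's inequality upgrades the expectation bound to a high-probability bound; intersecting with the event $b\le Cd_n^2$ finishes the proof. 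The main obstacle is the uniform per-step estimate of the previous paragraph: one must verify that the generic, badness-reducing rewirings dominate for \emph{every} configuration of the given badness, with the error terms controlled solely through $d_n=o(\sqrt m)$.
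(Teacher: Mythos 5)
Your proposal is correct and follows essentially the same route as the paper: bound the initial badness by Lemma~\ref{lem:badedges}, show that each of the $j$ bad edges is removed without creating new ones by a $(1+o(1))\tfrac{3}{m}$ fraction of $3$-swaps (the paper counts the $(4-o(1))m^2$ good completions out of $\tfrac{4}{3}m(m-1)(m-2)$ swaps directly, whereas you condition on the bad edge being selected and bound the $O(d_n^2/m)$ fraction of bad completions, which is the same computation), and then dominate $\tau$ by a sum of independent geometrics summing to $(1+o(1))\tfrac{m}{3}\log b$. Your final concentration step via Chebyshev, including the explicit caveat that $d_n\to\infty$ is needed there, is if anything slightly more careful than the paper's appeal to the law of large numbers.
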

\begin{proof}
Let $(a,b)$ be a bad edge, either a loop or one edge in a multiple edge. If we can find two edges $(c,d)$ and $(e,f)$ such that $(b,c)$, $(d,e)$ and $(f,a)$ do not exist as edges, then a 3-swap will remove the bad edge $(a,b)$ without creating any new bad edges, thereby reducing the badness by 1.

If $(b,c)$ and $(c,d)$ both exist as edges, then $c$ must be one of the $o(\sqrt m)$ vertices adjacent to $b$, and $d$ must be one of the $o(\sqrt m)$ vertices adjacent to $c$, hence there are at most $o(m)$ choices for an edge $(c,d)$ such that $(b,c)$ is also an edge. Similarly, there are $o(m)$ edges $(e,f)$ such that either $(d,e)$ or $(f,a)$ is an edge. Hence, the number of choices of edges $(c,d)$ and $(e,f)$ so that $(b,c)$, $(d,e)$ and $(f,a)$ are not edges is $(2m-o(m))^2=(4-o(1))m^2$. (Note that the factor of 2 arises since there is also a choice of ordering for the vertex pairs.)

For each set of 3 edges, there are 8 possible 3-swaps: an edge $(a,b)$ can be followed by either $(c,d)$, $(d,c)$, $(e,f)$ or $(f,e)$, and in each case there are 2 choices for the order of the remaining vertex pair, each of which produces a distinct 3-swap. Thus, the total number of choices of 3-swaps is $8{m\choose 3}=\frac43m(m-1)(m-2)$, hence the probability of a 3-swap that removes the bad edge $(a,b)$ without creating any new bad edges is $(1+o(1))3/m$.

If there are $b$ bad edges, then the probability of reducing the badness is $(1+o(1))3b/m$. The total number of steps to remove all bad edges is then bounded by a sum of independent geometric random variables with probabilities of success given by $3b/m,3(b-1)/m,\ldots,6/m,3/m$ to within a factor of $1+o(1)$. By the law of large numbers, this sum is $(1+o(1))\frac m3\log b$ with high probability. Lemma \ref{lem:badedges} implies that $\log b\le(1+o(1))\log(d_n^2)$, which completes the proof.
\end{proof}

\begin{remark}
Theorem \ref{thm:sqrt} also holds for 2-swaps by the same argument, with the number of steps given by $(1+o(1))m\log d_n$. Note that the overall running time is the same: losing the factor of $\frac23$ is exactly cancelled by the fact that each step requires only 2 operations rather than 3. Similar results also hold for $k$-swaps for any fixed $k\ge2$.
\end{remark}

It remains to estimate the distance to uniformity. Consider $\P[X_0=x]$, the probability that Algorithm B will output a given multigraph $x\in\S$. This probability is proportional to the number of matchings that give rise to $x$, which is $\prod_id_i!$ for a simple graph since each permutation of the edge-endpoints at each vertex results in a distinct matching, and decreases by a factor of $2^kk!$ for each $k$-tuple loop and $k!$ for each $k$-tuple non-loop edge. In particular, the initial state is uniform restricted to each set of multigraphs with a given set of bad edges.

The idea is to couple the Markov chain $X_t$ initially with another chain that always stays uniform on each set of multigraphs with a given set of bad edges, and show that the probability of decoupling vanishes as $n\rightarrow\infty$, which implies that $X_\tau$ is asymptotically uniform on the set of simple graphs. The central argument is that with high probability, each move of the chain either does not change any bad edges or removes a single bad edge, and for such moves, the measure stays approximately uniform with an error whose total contribution vanishes.

However, this presents a problem: if a set of bad edges $B$ contains a double edge $(i,j,2)$, then reducing that double edge to a single edge cannot possible yield an approximately uniform distribution on the multigraphs with bad edge set $B\setminus\{(i,j,2)\}$, because the multiplicity of $(i,j)$ must be 1, while it is 0 for a typical multigraph satisfying our sparseness condition. On the other hand, once the multiplicity becomes 0, then the measure can be approximately uniform, because only a vanishing proportion of multigraphs have multiplicty 1 for any given pair of vertices. This suggests that we can solve this problem by also counting the single edge left over from removing a double edge as a bad edge, and requiring such bad edges to also be removed.

Thus, we will expand the definition of ``bad edge'' to also include any edge left over from reducing the multiplicity of a multiple edge. Note that this means that badness is no longer a function of the multigraph itself, but of the sample path of multigraphs since the initial state. However, this will not present an issue in the proof.

\begin{theorem}
\label{thm:uniform}
If $d_n=o(m^{1/4})$, then Algorithm A (using the modified definition of badness) can be optimised to terminate in time $O(m)$ with high probability, and the output is asymptotically uniform in total variation distance as $m\rightarrow\infty$.
\end{theorem}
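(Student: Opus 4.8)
The plan is to treat the two assertions—running time and uniformity—separately, the first being essentially routine given the earlier results and the second being the real content. For the running time, I would start from Lemma \ref{lem:badedges}, which gives initial badness $b=O(d_n^2)$ with high probability; the modified definition of badness at most doubles this count, since each reduced multiple edge contributes one extra leftover single edge, so the badness remains $O(d_n^2)$. The optimisation exploits the observation, visible in the proof of Theorem \ref{thm:sqrt}, that the overwhelming majority of proposed $3$-swaps leave the badness unchanged and hence have no effect on the law of the process conditioned on the bad-edge set, so we may skip them. Concretely, as in the permutations and general linear group examples, I would maintain a list of the current bad edges and directly target a random one for removal, choosing its two companion edges among the good edges; each badness-reducing move then costs $O(1)$ amortised work. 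Since each such move reduces the badness by exactly one, the number of effective moves is $O(d_n^2)=o(\sqrt m)$ under $d_n=o(m^{1/4})$, and adding the $O(m)$ cost of Algorithm B gives overall running time $O(m)$.

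For uniformity I would follow the coupling strategy outlined before the statement. The starting point is the recorded fact that $\P[X_0=x]$ depends only on the set of bad edges of $x$; equivalently, $X_0$ is exactly uniform conditioned on its bad-edge set. I would introduce a shadow process $Y_t$, coupled to $X_t$ through the same sequence of proposed $3$-swaps with $Y_0=X_0$, and designed so that $Y_t$ remains exactly uniform conditioned on its (common) bad-edge set at every step. The coupling persists as long as every accepted move is \emph{clean}, meaning it either leaves the bad-edge set unchanged or removes exactly one bad edge while introducing only good edges.

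I would then split the analysis by type of clean move. A move preserving the bad-edge set acts as a symmetric $3$-swap on the good part, and since symmetric moves preserve the uniform measure exactly (the content of Proposition \ref{prop:metropolis} restricted to the good part), such moves introduce no error. The delicate case is the clean removal of a single bad edge: here I must show that swapping the targeted bad edge with two uniformly chosen good edges carries the uniform measure on multigraphs with bad-edge set $B$ to a measure on multigraphs with bad-edge set $B\setminus\{e\}$ that is uniform up to a small error. This is where the modified definition of badness is essential—reducing a double edge to a single edge cannot yield a uniform measure, since multiplicity-one pairs are atypical, so the leftover single edge must itself be tracked as bad and removed, after which the target multiplicity is zero and uniformity becomes attainable. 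Counting, for each target good graph, the number of source-and-move pairs reaching it, I expect the per-step total-variation error to be $O(b_t/m)$, the correction arising from the $O(b_t)$ forbidden choices (those producing a coincident edge, a loop, or touching a second bad edge) among the roughly $2m$ candidate good edges.

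Finally I would collect the errors. Summing the per-step error $O(b_t/m)$ over the $O(d_n^2)$ effective steps, with $b_t$ decreasing from $b=O(d_n^2)$ to $0$, gives a total of $O(b^2/m)=O(d_n^4/m)$, which is $o(1)$ precisely when $d_n=o(m^{1/4})$; this is the origin of the sparseness threshold. Combined with a separate argument that the probability of ever proposing a \emph{dirty} move—one that simultaneously creates and removes bad edges, or touches two bad edges at once—is also $o(1)$, again controlled by the ratio of bad edges to total edges, this shows $X_\tau=Y_\tau$ with probability $1-o(1)$, and since $Y_\tau$ is exactly uniform on simple graphs, the output $X_\tau$ is asymptotically uniform in total variation. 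The main obstacle is the clean-removal step: establishing that a single targeted swap genuinely preserves conditional uniformity up to an $O(b_t/m)$ error requires a careful count accounting for the modified badness and for the forbidden swaps that the sparseness condition renders rare.
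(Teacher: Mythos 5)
Your overall architecture matches the paper's proof almost exactly: exact conditional uniformity of $X_0$ given its bad-edge set, a shadow/decomposition argument tracking how far the conditional law drifts from uniform, a per-removal error multiplied over the $O(d_n^2)=o(\sqrt m)$ removals, the modified badness to handle the leftover single edge of a reduced multiple edge, a separate $o(1)$ bound on ``dirty'' transitions, and the optimisation to $O(1)$ work per effective step so that Algorithm B's $O(m)$ initialisation dominates. The final bound $O(d_n^4/m)=o(1)$ and the resulting threshold $d_n=o(m^{1/4})$ are the same as the paper's.

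However, the step you yourself flag as the main obstacle is a genuine gap, and it is the technical heart of the theorem. The paper proves it by fixing a target $y\in S_{B_{t+1}}$ and counting $A(y)$, the number of $3$-swaps from $S_{B_t}$ landing at $y$: this count factorises into a part depending only on $B_{t+1}$ and the incident vertex (the choice of two edges at the hub of the removed loop, or at the two endpoints of the removed multiple edge), times a choice of third edge, of which $2m-O(d_n^2)$ are admissible. Hence $A(y)$ is constant over $y$ up to a relative fluctuation $O(d_n^2/m)$, which is what makes the per-removal error $O(d_n^2/m)$ and the product over $o(\sqrt m)$ removals tend to $1$. Your asserted per-step error $O(b_t/m)$ misattributes the dominant source: the error does not come from the $O(b_t)$ other bad edges but from the $O(d_n^2)$ adjacency-forbidden choices of companion edges, and this count does \emph{not} shrink as $b_t\to0$ (your sum happens to land on the same order $O(d_n^4/m)$ only because $b_t\le O(d_n^2)$ throughout). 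More importantly, this factorisation is exactly where the choice of $3$-swaps over $2$-swaps matters (Remark \ref{rem:3swap}): with $2$-swaps the in-degree $A(y)$ is governed by the number of non-adjacent pairs of neighbours of a vertex, whose fluctuations are only $o(1)$ rather than $o(m^{-1/2})$, and the argument fails. A complete proof must carry out this count; without it the claim that a clean removal preserves conditional uniformity to within $O(d_n^2/m)$ is unsupported. The paper also handles one case you gloss over: when the last copy of a multiple edge $(i,j,1)$ is removed, the resulting law is uniform only on the multigraphs \emph{not} containing $(i,j)$, and one must invoke the estimate (\ref{eqn:prob}) that a typical multigraph contains $(i,j)$ with probability $d_id_j/2m=o(m^{-1/2})$ to absorb this into the same error budget.
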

\begin{proof}
We need to introduce notation to describe multigraphs with a common set of bad edges. Let $B=\{(i_1,j_1,k_1),\ldots,(i_\ell,j_\ell,k_\ell)\}$ be a set of vertex pairs $(i_1,j_1),\ldots,(i_\ell,j_\ell)$ with prescribed multiplicities $k_1,\ldots,k_\ell$. We will think of $B$ as a set of bad edges (under the modified definition). Let $S_B$ be the set of multigraphs containing $B$ with no other bad edges, that is, multigraphs whose multiplicities between vertex pairs in $B$ agree with $B$, and with no other loops or multiple edges.

Under this notation, our observation that the initial state is uniform on each set of multigraphs that share a set of bad edges becomes $\L(X_0\,|\,X_0\in S_B)=U_B$, where $\L(X|A)$ denotes the conditional distribution of the random variable $X$ restricted to the event $A$, and $U_B$ denotes the uniform measure on $S_B$.

Let $B_0,B_1,\ldots$ be such sets of bad edges, and let $\B_t=\{X_0\in S_{B_0},\ldots,X_t\in S_{B_t}\}$ be the event that $X_0,\ldots,X_t$ have bad edge sets $B_0,\ldots,B_t$ respectively. Then, let
\[\L(X_t\,|\,\B_t)=c_tU_{B_t}+(1-c_t)E_t,\]
where $0\le c_t\le1$ is the largest number so that the above equation holds with $E_t$ a signed measure with total absolute mass at most 1. This gives us the total variation bound $||\L(X_t|\B_t)-U_{B_t}||_{TV}\le1-c_t$, so the goal is to show that for almost every sequence of bad edge sets $B_0,B_1,\ldots$, we have $c_\tau\rightarrow1$ as $n\rightarrow\infty$. Note that given $\B_\infty=\bigcap_t\B_t$, the stopping time $\tau=\inf\{t\ge0:b(B_t)=0\}$ is deterministic, so there are none of the usual convergence issues associated with a random stopping time.

Call the sequence $B_0,B_1,\ldots$ \textbf{admissible} if consecutive bad edge sets satisfy either:
\begin{itemize}
\item $B_{t+1}$ is the same as $B_t$;
\item $B_{t+1}$ is $B_t$ with one non-simple edge is removed, that is, with one element $(i,j,\ell)$ replaced by $(i,j,\ell-1)$ where $\ell\ge2$; or
\item $B_{t+1}$ is $B_t$ with one simple edge removed $(i,j,1)$ removed.
\end{itemize}

For an admissible sequence of bad edge sets, we can estimate $c_t$ as follows.
\begin{itemize}
\item If $B_{t+1}=B_t$, then since the transition probabilities within $S_{B_t}$ are symmetric, uniformity is preserved, and hence $c_{t+1}=c_t$.

\item Suppose $B_{t+1}$ removes a loop from $B_t$, that is, replaces $(i,i,\ell)$ with $(i,i,\ell-1)$ where $\ell\ge2$. We will show that $c_{t+1}=(1-O(m^{-1/2}))c_t$, by showing that the law of $Q(U_{B_t})$ restricted to $\S_{B_{t+1}}$ is within a factor of $1-O(m^{-1/2})$ of uniform.

For any multigraph $x\in\S_{B_t}$, there are $\frac43m(m-1)(m-2)$ possible 3-swap transitions, and by the argument in Theorem \ref{thm:sqrt}, $(2m-O(d_n^2))^2=(1+o(m^{-1/2}))4m^2$ of them remove the loop $(i,i)$ without creating any new bad edges. Thus, for any multigraph $y\in\S_{B_{t+1}}$ which is adjacent to $A(y)$ points in $\S_{B_t}$ (counting multiplicity), the mass at $y$ under $Q(U_{B_t})$ is
\begin{equation}
\label{eqn:probratio}
\sum_xU_{B_t}(x)Q(x,y)=
A(y)\frac{1}{|\S_{B_t}|}\frac{(1+o(m^{-1/2}))4m^2}{\frac43m(m-1)(m-2)}.
\end{equation}
We can compute $A(y)$ by counting the number of 3-swaps that return $y$ to $\S_{B_t}$. This is given by the choice of a pair of non-loop non-coincident edges $(i,j)$ and $(i,k)$ (that is, with $i$, $j$ and $k$ distinct vertices), followed by the choice of a third edge $(j',k')$ such that $(j,j')$ and $(k,k')$ are not edges.

The number of choices for the pair of edges incident at $i$ depends only on $B_{t+1}$ and $i$, not on $y$ itself. For any given choice, again by the argument in Theorem \ref{thm:sqrt}, there are $O(d_n^2)$ choices that have an edge at $(j,j')$ and $O(d_n^2)$ choices that have an edge at $(k,k')$, and hence there are $2m-O(d_n^2)=(1-o(m^{-1/2}))2m$ choices that have neither edge.

This shows that $A(y)$ fluctuates by a factor of $o(m^{-1/2})$, hence the mass at $y\in\S_{B_{t+1}}$ under $Q(U_{B_t})$ also fluctuates by a factor of $o(m^{-1/2})$, from which we can conclude that $c_{t+1}=(1-O(m^{-1/2}))c_t$.

\item If $B_{t+1}$ removes a non-simple non-loop edge from $B_t$, replacing $(i,j,\ell)$ with $(i,j,\ell-1)$ where $\ell\ge2$, a similar counting argument shows that again, we have $c_{t+1}=(1-O(m^{-1/2}))c_t$.

\item In both the above cases, if $\ell=1$, the same argument shows that the resulting distribution is within $1-O(m^{-1/2})$ of uniform on the subset of $S_{B_{t+1}}$ consisting of multigraphs which do not contain the edge $(i,j)$. However, (\ref{eqn:prob}) shows that the edge $(i,j)$ occurs with probability at most $d_id_j/2m=o(m^{-1/2})$, so again, we have $c_{t+1}=(1-O(m^{-1/2}))c_t$.
\end{itemize}

Since the number changes in $B_t$ is at most the initial (modified) badness, which is $o(\sqrt m)$ by Lemma \ref{lem:badedges}, and each change results in a factor of $1-O(m^{-1/2})$ change to $c_t$, it follows that $c_\tau=(1-O(m^{-1/2}))^{o(\sqrt m)}=1-o(1)$.

Thus, given any admissible sequence of bad edge sets, $X_\tau$ is asymptotically uniform on simple graphs. By Theorem \ref{thm:sqrt}, $B_t$ is eventually constant, so there are only \mbox{countably} many sequences of bad edge sets (except for a set of sequences with probability 0). Since the bound on uniformity depends only on the degree sequence and not on the sequence of bad edge sets, it follows that $X_\tau$ is asymptotically uniform on simple graphs across all admissible sequences. Hence, it only remains to show that almost every sequence is admissible.

A sample path can lie outside an admissible sequence only if it undergoes an inadmissible transition, where more than one bad edge is modified in one step. If there are $b$ bad edges, this can occur by picking a bad edge along with either another bad edge or an edge adjacent to the bad edge, which occurs with probability at most $2b(b+2d_n)/m^2=O(b^2/m^2)$. Since the chain has $b$ bad edges for $O(m/b)$ steps, the total probability of this occuring is at most
\[\sum_bO(b^2/m^2)O(m/b)=O(b^2/m)=o(1).\]
Finally, note that only the steps which remove one bad edge only are crucial in our argument. Thus, the algorithm can be optimised to always choose a 3-swap between one bad edge and two good edges at each step, with rejection if any new bad edges are created, which can be achieved in $O(1)$ operations by maintaining both an array of edges with a block of bad edges followed by a block of good edges, and a hash table of multiplicities between vertex pairs.

By Lemma \ref{lem:badedges}, there are $o(\sqrt n)$ bad edges, so the run time is dominated by the $O(m)$ steps required for the initial multigraph generated via Algorithm B.
\end{proof}

\begin{remark}
\label{rem:3swap}
It is necessary to use 3-swaps rather than 2-swaps in order for (\ref{eqn:probratio}) to concentrate within a factor of $o(m^{-1/2})$. Using 2-swaps, the quantity $A(y)$ would be enumerated by pairs of non-loop non-coincident edges $(i,j)$ and $(i,k)$ such that $(j,k)$ does not exist as an edge, for which the fluctuations are $o(1)$ rather than $o(m^{-1/2})$, which is not enough to prove Theorem \ref{thm:uniform}. However, strengthening the sparseness condition to $d_n=o(m^{1/8})$ results in the fluctuations becoming small enough for Theorem \ref{thm:uniform} to hold with 2-swaps.
\end{remark}

\begin{remark}
The modified definition of badness is not only useful for the proof, but is also much easier to implement in practice, since it guarantees that a bad edge can only become good through its own removal, not through the removal of any other edge. There is very little cost to using this modified definition: while the number of steps the chain needs to run is roughly doubled, the number of operations needed to keep track of the bad edges at each step is also roughly halved. Nonetheless, as a theoretical point of interest, we conjecture that the modification is not actually necessary, since the probability that one of the $o(\sqrt m)$ multiple edges occurs between any given vertex pair is $O(1/m)$, so we would expect the effects of leftover edges to cancel out over the entire probability space. 
\end{remark}

\subsection{Multi-star graphs}
\label{sec:multistar}
\begin{definition}
A \textbf{multi-star} degree sequence is one of the form $(1,\ldots,1,d_1,\ldots,d_k)$, where the number of 1s is equal to $d_1+\cdots+d_k$. This corresponds to the degree sequence of a disjoint union of $k$ star graphs. We consider $k$ to be fixed, while the degrees may grow. Call the vertices of degree 1 \textbf{leaves}, and call the vertices with degrees $d_1,\ldots,d_k$ \textbf{hubs}.
\end{definition}

Note that for $k=1$, this corresponds to a star graph, and for $k=2$, this corresponds to the counterexample of \cite{bezakova2}, who showed that for such degree sequences, the algorithm of \cite{chen} produces a bipartite graph that is exponentially far from uniform. In fact, every existing non-MCMC sampling algorithm suffers the same exponentially bad behaviour under this single counterexample. On the other hand, while MCMC has polynomial runtime for this example, the provable bounds are not practically useful. Thus, this is a very important case in the problem of sampling graphs with given degree sequence, for which the Expand and Contract strategy works extremely well, as we will show in this section.

In the $k=2$ case, there are only two equivalence classes under relabelling---the two hubs are either adjacent or not adjacent; the remaining edges are determined by this choice, up to relabelling. A quick computation yields $2c\choose c-1,c-1,2$ graphs with adjacent hubs and $2c\choose c$ graphs with non-adjacent hubs, so the probability that the hubs are not adjacent is around $2/c^2$ under the uniform measure. The direct sampling algorithms that choose one edge at a time have a $\Theta(1)$ chance to choose the edge between the hubs at each of $\Theta(c)$ steps, and thus yield a graph where the probability that the hubs are non-adjacent is $\exp(-\Theta(c))$. This is exponentially far from uniform under the following definition.

\begin{figure}[H]
\centering
\parbox{.8\textwidth}{
\hfill
\includegraphics[width=4cm]{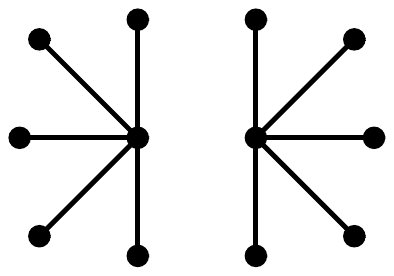}
\hfill
\includegraphics[width=4cm]{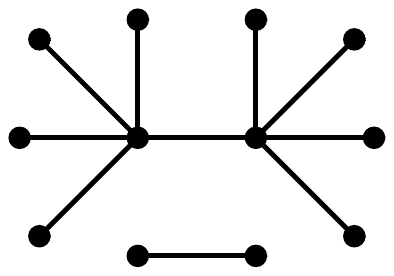}
\hfill\hfill

\caption[Multi-star graphs with $k=2$]{
The two equivalence classes of multi-star graphs when $k=2$, drawn for $c=5$. The class on the left (with the hubs non-adjacent) is $c^2/2$ times larger than the class on the right (with the hubs adjacent).
}
}
\end{figure}

\begin{definition}
The \textbf{probability ratio metric} between two probability measures $\mu$ and $\nu$ on a common finite set $S$ is
\[d(\mu,\nu)=\max_{x\in S}\big|\log\mu(x)-\log\nu(x)\big|.\]
We say these two measures are exponentially far apart when $d(\mu,\nu)$ grows super-linearly with $\log n$, which corresponds to the existence of a state $x\in S$ such that the ratio $\mu(x)/\nu(x)$ gets exponentially large or exponentially small. In particular, we are insisting that the probability of every state is close to that of the target measure, which is a significantly stronger condition than more commonly used distances such as total variation.
\end{definition}

The need for a stronger metric for uniformity comes from the fact that in the $k=2$ case, the ratio between the sizes of equivalence classes is $2/c^2$, which converges to 0, so one of them (the one where the two hubs are not adjacent) has vanishingly small probability. Under total variation distance, an algorithm which never produces a graph in this equivalence class could nonetheless be asymptotically uniform, which is not a desirable trait. This is avoided by using the probability ratio metric.

The algorithms of \cite{mckaywormald}, \cite{steger}, \cite{bayati} and \cite{blitzstein} all produce outputs that are exponentially far from uniform in this example, by the argument above. The MCMC algorithms may not perform exponentially badly, but do not fare much better---the algorithm of \cite{jerrum2} is only proven to run in polynomial time for maximum degree $d_n\le\sqrt{n/2}$, while the algorithm of \cite{bezakova} does provably run in polynomial time but is particularly unwieldy, carrying the prohibitively expensive run time of $O(n^8(\log n)^5)$, as well as requiring that one starts at a graph generated by a fairly complicated bootstrapping process.

On the other hand, we will show that Expand and Contract performs remarkably well. Note that the degee sequence does not satisfy the conditions of Theorem \ref{thm:sqrt}; in some sense it is as far as possible from satisfying the conditions. The fact that we nonetheless obtain good results suggest that the strategy is far more robust than we are currently able to prove.

We will return to the old definition of badness which does not include simple edges left over from removing a multiple edge, as well as using 2-swaps instead of 3-swaps. This is purely for the reason of making the analysis and discussion simpler; the same results carry through regardless of the choice. We will also need a prescribed number of additional steps $T$ for which the chain should be run after reaching a simple graph.

\vspace{8pt}
\algorithm{A$'$}{
Expand and Contract for multi-star graphs.
\begin{enumerate}
\item[(1)] Run Algorithm B to obtain a random multigraph.
\item[(2)] Pick two edges $\{(a,b),(c,d)\}$ randomly (also pick vertex order randomly) and replace them with $\{(b,c),(d,a)\}$, rejecting moves that increase the number of bad edges.
\item[(3)] When a simple graph is reached, run a further $T$ steps and stop.
\vspace{-10pt}
\end{enumerate}
}\vspace{4pt}

\begin{theorem}
\label{thm:multistar}
For any multi-star degree sequence, there exists $T=O(\log m)$ such that Algorithm A$'$ can be optimised to terminate in time $O(m)$ with high probability, and the resulting graph is within $O(1/m)$ of uniform in the probability ratio metric.
\end{theorem}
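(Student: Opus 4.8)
The plan is to exploit the large symmetry coming from the $m$ degree-one vertices. Let $\Sigma$ be the group of permutations of the leaf labels. Both Algorithm B (a uniform matching of edge-endpoints) and the $2$-swap chain with badness rejection are $\Sigma$-equivariant: relabelling leaves commutes with picking a uniform matching, with picking uniform edges and swapping them, and with the rejection rule, since badness (loops and multiple edges, which are incident only to the $k$ hubs) is $\Sigma$-invariant. The same holds for the optimised move used below, provided the auxiliary choices are made $\Sigma$-equivariantly. Since $\L(X_0)$ is $\Sigma$-invariant and $\tau=\inf\{t:b(X_t)=0\}$ is a $\Sigma$-invariant stopping time, $\L(X_{\tau+T})$ is $\Sigma$-invariant, hence uniform on each $\Sigma$-orbit. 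For a simple multi-star graph the $\Sigma$-orbit is exactly its \emph{shape class}, the set of graphs with a prescribed hub--hub adjacency pattern, because once the hub--hub edges are fixed, $\Sigma$ acts transitively on the assignments of leaves to the remaining hub slots and on the pairings of the leftover leaves. Consequently the output is \emph{automatically} uniform within each shape class, and the whole problem reduces to controlling the induced distribution on the $O(1)$ possible shapes. This is what makes multi-stars tractable despite being maximally non-sparse, and is why no delicate coupling as in Theorem \ref{thm:uniform} is required.

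For the running time, write $H,LL,HL$ for the numbers of hub--hub, leaf--leaf and hub--leaf matches produced by Algorithm B. The identities $2H+HL=2LL+HL=m$ give $H=LL$, and $\E[H]=\Theta(m)$, so there are $\Theta(m)$ bad edges but also $\Theta(m)$ leaf--leaf edges; in general $LL=b+O(1)$ where $b$ is the current badness, so since $LL\ge b$ a leaf--leaf edge is available whenever $b\ge1$. The optimisation is to pick a bad edge together with a uniform leaf--leaf edge and swap: removing a loop $(i,i)$ or one copy of a multiple edge $(i,j)$ against a leaf--leaf edge $(\ell_1,\ell_2)$ produces only the hub--leaf edges $(i,\ell_1),(i,\ell_2)$ or $(i,\ell_1),(j,\ell_2)$, which are always good. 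Thus every step lowers the badness by $1$ in $O(1)$ time, and since $b\le m$ the contraction finishes in $O(m)$ steps. Added to the $O(m)$ cost of Algorithm B and the $O(\log m)$ extra steps, the total is $O(m)$.

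It remains to analyse the shape distribution. Because each hub pair $(i,j)$ receives $\Theta(d_id_j/m)=\Theta(m)$ matches, contraction only reduces multiplicities to one and never deletes a hub--hub edge, so with high probability the shape at time $\tau$ is the complete hub graph, the top of the shape lattice. The extra $T$ steps run the plain chain $Q^*$ on simple graphs, which is symmetric and $\Sigma$-equivariant and therefore descends to a reversible Markov chain on the $O(1)$ shapes whose stationary law $q^*$ is the uniform-induced shape distribution, with $q^*_\sigma\propto\prod_{(p,q)\in\sigma}d_pd_q$ up to an $O(1)$ factor. A single $2$-swap adds a hub--hub edge $(p,q)$ with probability $\Theta(\ell_p\ell_q/m^2)$ and removes one with probability $\Theta(1/m^2)$, where $\ell_i=d_i-\deg_\sigma(i)$. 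For balanced degrees $d_i=\Theta(m)$ the ``up'' rates are $\Theta(1)$, and I will show the lumped chain then has spectral gap $\Omega(1)$; since there are $O(1)$ shapes with $q^*_\sigma\ge m^{-O(1)}$ we have $\chi^2(q^{(0)}\|q^*)\le m^{O(1)}$, and reversible $\chi^2$-contraction gives $\chi^2(q^{(T)}\|q^*)\le(1-\Omega(1))^{2T}m^{O(1)}$. Taking $T=C\log m$ makes each relative error $|q^{(T)}_\sigma/q^*_\sigma-1|\le\sqrt{\chi^2(q^{(T)}\|q^*)/q^*_\sigma}=O(1/m)$, i.e.\ $d(q^{(T)},q^*)=O(1/m)$ in the probability ratio metric. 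Combined with within-class uniformity this yields output within $O(1/m)$ of uniform.

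The main obstacle is the spectral gap of the lumped shape chain, whose up- and down-rates live on the very different scales $\Theta(\ell_p\ell_q/m^2)$ and $\Theta(1/m^2)$. I would establish the gap by noting that $q^*$ factorises over hub pairs up to an $O(1)$ multiplicative correction, comparing the chain to a product of two-state chains (one per hub pair, each with gap equal to its combined up-plus-down rate), and letting the down-rates tend to $0$, which makes the transition matrix upper triangular with second eigenvalue bounded away from $1$. This shows the gap is governed by $\min_{(p,q)}\ell_p\ell_q/m^2$, which is $\Omega(1)$ precisely when every hub degree grows linearly in $m$; in that balanced regime $T=O(\log m)$ suffices, while a hub of sublinear degree forces a correspondingly larger $T$. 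A secondary point to check is that the optimised contraction move is genuinely $\Sigma$-equivariant, so that within-class uniformity is preserved throughout.
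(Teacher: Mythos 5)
Your proposal follows the paper's overall architecture --- quotient by the leaf-relabelling group so that the output is automatically uniform within each hub--hub adjacency class, compute the lumped chain's rates (add $\Theta(1)$, delete $\Theta(1/m^2)$, move $\Theta(1/m)$, stationary mass $\Theta(m^{-2e})$), and optimise the contraction by always swapping a bad hub--hub edge against a leaf--leaf edge, using $H=LL$ to guarantee one is available --- but it departs genuinely in how mixing of the $O(1)$-state shape chain is proved. The paper argues by hand: a coupling of $e(X_{\tau+t})$ with a birth--death chain on $\{0,\dots,K\}$ gives two-sided $\Theta(m^{-2e(x)})$ bounds after $O(\log m)$ steps, and then an induction on the level $e(x)$, via an explicit mass-balance recursion for $p_{t+1}(x)$, upgrades $\Theta(\cdot)$ to $1+O(1/m)$. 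You instead invoke an $\Omega(1)$ absolute spectral gap plus reversible $\chi^2$-contraction from $\chi^2(q^{(0)}\Vert q^*)\le m^{O(1)}$, which is cleaner and makes the $T=O(\log m)$ conclusion essentially automatic. The one step you rightly flag as the obstacle --- the spectral gap --- is true but your sketch (sending the down-rates to $0$ and reading eigenvalues off an upper-triangular limit) is fragile for a non-normal perturbation; the robust route is to symmetrise with respect to $q^*$: every off-diagonal entry of $D^{1/2}PD^{-1/2}$ is $O(1/m)$ (adds are $\Theta(1)\cdot\Theta(1/m)$, deletes are $\Theta(1/m^2)\cdot\Theta(m)$, moves are $\Theta(1/m)\cdot\Theta(1)$), while every shape other than the complete one has holding probability at most $1-\Omega(1)$, so Gershgorin isolates the eigenvalue $1$ and bounds all others into $[-O(1/m),\,1-\Omega(1)]$. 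Two further points of agreement with the paper that you should make explicit rather than conditional: the $\Theta(1)$ up-rate, and hence the $\Omega(1)$ gap, requires every hub degree to satisfy $d_i=\Theta(m)$ --- the paper's Lemma~\ref{lem:probs} silently assumes the same thing, so your observation identifies a genuine implicit hypothesis rather than a defect of your own argument; and the optimised contraction move, being non-symmetric on $\S$, preserves only $\Sigma$-invariance rather than within-$S_B$ uniformity, which is exactly why the multi-star analysis (unlike Theorem~\ref{thm:uniform}) survives the optimisation --- your proof makes this dependence on symmetry alone clearer than the paper does.
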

\begin{proof}
A bad edge can only occur between two hubs, and since the total hub degree is the same as the total leaf degee, for each bad edge, there must be a disconnected edge between two leaves. Thus, if there are $b$ bad edges, there are at least $2b^2$ choices of edge flip (counting orientation) that decrease the badness, hence the probability of decreasing the badness is at least $2b^2/m^2$. The number of steps to remove all bad edges is bounded by a sum of geometric random variables with this probability of success, which is $O(m^2)$ with high probability. This can be optimised to $O(m)$, which will be discussed towards the end of the proof.

Let $\tilde\S$ and $\tilde\S_0$ denote the equivalence classes of $\S$ and $\S_0$ respectively under relabelling of leaves. Both the initial distribution and the moves of the Markov chain $X_t$ on $\S$ are symmetric with respect to the labelling, so $\tilde X_t$, the equivalence class of $X_t$, is a Markov chain on $\tilde\S$ with absorbing set $\tilde\S_0$.

Let $\pi$ be the (uniform) stationary distribution of $X_t$, and let $\tilde\pi$ be the stationary distribution of $\tilde X_t$. For any $x\in\S_0$ with equivalence class $\tilde x\in\tilde\S_0$, we have the relation
\[\frac{\tilde\pi(\tilde x)}{\pi(x)}=|\tilde x|=\frac{\P(\tilde X_t=\tilde x)}{\P(X_t=x)}.\]
Hence, the distance to uniform (under the probability ratio metric) in $\S_0$ is exactly equal to the distance to uniform in $\tilde\S_0$, so it suffices to prove that the $\epsilon$-mixing time (again, under the probability ratio metric) of the $\tilde X_t$ chain is $O(\log m)$.

The remainder of the proof relies on the idea that equivalence classes are determined by the edges between the hubs. This shows that the size of $\tilde\S_0$ is bounded; we also find that the probability of adding an edge is $\Theta(1)$, while the probability of moving or removing an edge is $\Theta(1/m^2)$. Roughly speaking, we will show that a Markov chain whose state space is bounded in size and which has these transition probabilities will mix in time $O(\log m)$.

For notational convenience, let $K={k\choose2}$, and drop the tildes: we will be referring to the equivalence classes for the remainder of the proof, so $x$ will mean $\tilde x$, and so on. Let $e(x)$ denote the number of pairs of hubs in $x$ which are non-adjacent, and observe that $0\le e(x)\le K$. Let $x_0$ be the equivalence class corresponding to a complete graph between hubs, with $e(x_0)=0$.

\begin{lemma}
The stationary distribution satisfies $\pi(x)=\Theta(m^{-2e(x)})$, the probability of adding an edge between hubs is $\Theta(1)$, the probability of removing an edge betwen hubs is $\Theta(1/m^2)$, and the probability of moving an edge from one pair of hubs to another is $\Theta(1/m)$.
\label{lem:probs}
\end{lemma}

\begin{proof}
An equivalence class $x$ whose hubs $1,\ldots,k$ are connected to $c_1,\ldots,c_k$ other hubs has cardinality
\[\frac{(d_1+\cdots+d_k)!}{(d_1-c_1)!(d_2-c_2)!\cdots(d_k-c_k)!\,2^{(c_1+\cdots+c_k)/2}((c_1+\cdots+c_k)/2)!}.\]
Removing an edge $(i,j)$ reduces the size of the equivalence class by a factor of
\[(d_i-c_i+1)(d_j-c_j+1)/(c_1+\cdots+c_k)=\Theta(m^2).\]
The stationary probabilities of equivalence classes are proportional to their cardinality. Thus, $x_0$ has the highest probability, equivalence classes with $e(x)=1$ have a smaller probability by a factor of $\Theta(m^2)$, and so on. Since the number of states is bounded, this shows that $\pi(x)=\Theta(m^{-2e(x)})$.

There are three types of transitions in the chain $X_t$:
\begin{enumerate}
\item Adding an edge $(i,j)$ between hubs can occur when the edge flip is chosen between a hub-leaf edge at $i$ and another hub-leaf edge at $j$, which occurs with probability $(d_i-c_i)(d_j-c_j)/m(m-1)=\Theta(1)$.
\item Removing an edge $(i,j)$ between hubs can occur when the edge flip is chosen between the hub-hub edge $(i,j)$ and any leaf-leaf edge, which occurs with probability $(c_1+\cdots+c_k)/m(m-1)=\Theta(1/m^2)$. Note that there are $(c_1+\cdots+c_k)/2$ leaf-leaf edges, but we also multiply by 2 because both orientations of the edge flip will remove the edge $(i,j)$.
\item Moving an edge between hubs from $(i,j)$ to $(i,k)$ can occur when the edge flip is chosen between $(i,j)$ and a hub-leaf edge at $k$, which occurs with probability $(d_k-c_k)/m(m-1)=\Theta(1/m)$. \qedhere
\end{enumerate}
\end{proof}

\begin{lemma}
\label{lem:inductionstart}
There exists $T_0=O(\log m)$ such that for all $t\ge T_0$ and equivalence classes $x$,
\[\P(X_{\tau+t}=x)=\Theta(m^{-2e(x)}).\]
\end{lemma}

\begin{proof}
By Lemma \ref{lem:probs}, at each step, $e(X_t)$ decreases by 1 with probability $\Theta(1)$ unless it is already 0, and increases by 1 with probability $\Theta(1/m^2)$ unless it is already $K$. Let $e_t$ be a Markov chain on $\{0,1,\ldots,K\}$, starting at $K$, which decreases by 1 with probability $\delta$ and increases by 1 with probability $1/\delta m^2$, except at the endpoints. For sufficiently small $\delta$, we can couple $X_t$ and $e_t$ so that $e(X_{\tau+t})\le e_t$ for all $t$.

For the chain $e_t$, the probability of being at 0 within the next $K$ steps is bounded below by $\delta^K$, thus the probability that the sample path has not passed through 0 in $t$ steps is at most $(1-\delta^K)^{\lfloor t/K\rfloor}$. Hence, for all $t\ge T'_0=2K^2\log m/|\log(1-\delta^K)|$, the sample path has not passed through 0 with probability at most $m^{-2K}$.

On the other hand, if the sample path has passed through 0, then the probability of being in state $e$ is at most $(\delta m)^{-2e}$. We can prove this by induction: it is clearly true if the chain is currently at 0, and if it is true at any step, then it remains true at the next step, since any state $e$ will have probability at most
\[\left(1-\delta-\frac1{\delta m^2}\right)(\delta m)^{-2e}+\frac1{\delta m^2}(\delta m)^{-2(e-1)}+\delta(\delta m)^{-2(e+1)}=(\delta m)^{-2e}.\]
Thus, for any $t\ge T'_0=O(\log m)$,
\[\P(e(X_{\tau+t})\ge e)\le\P(e_t\ge e)=O(m^{-2e}).\]
This proves the upper bound $\P(X_{\tau+t}=x)=O(m^{-2e(x)})$. For the lower bound, observe that the upper bound implies $\P(X_{\tau+t}=x_0)=\Theta(1)$. Since edges are removed with probability $\Theta(1/m^2)$ at each step, at the next step, the probability of every state with $e(x)=1$ is at least $\Theta(1/m^2)$, and at the next step, the probability of every state with $e(x)=2$ is at least $\Theta(1/m^4)$, and so on. This yields the lower bound for $T_0=T'_0+K=O(\log m)$.
\end{proof}

Lemmas \ref{lem:probs} and \ref{lem:inductionstart} imply that the probability of every state other than $x_0$ is $O(1/m^2)$, under both the stationary distribution and $X_{\tau+t}$ for $t\ge T_0$, hence $\P(X_{\tau+t}=x_0)$ is within a factor of $1+O(1/m^2)$ from stationary. We will use induction to show that after $O(\log m)$ further steps, the probability of every other state is also within a factor of $1+o(1)$ from stationary.

\begin{lemma}
\label{lem:induction}
Suppose there exists $T_\ell=O(\log m)$ such that for all $t\ge T_\ell$ and all equivalence classes $x$ with $e(x)\le\ell$, $\P(X_{\tau+t}=x)=(1+O(1/m))\pi(x)$. Then there exists $T_{\ell+1}=O(\log m)$ satisfying the same property with $\ell+1$ instead of $\ell$.
\end{lemma}

\begin{proof}
Consider a state $x$ with $e(x)=\ell+1$, and suppose $t>\max(T_0,T_\ell)$. Any state $x'$ adjacent to $x$ has $e(x')=\ell$, $\ell+1$ or $\ell+2$; we can determine their probabilities and their contribution to the total mass entering and leaving $x$ using Lemmas \ref{lem:probs} and \ref{lem:inductionstart}.

For $e(x')=\ell+2$, the mass at $x'$ is $\Theta(m^{-2\ell-4})$, and the probability of entering $x$ is $\Theta(1)$, so the contribution to mass entering and leaving $x$ is $O(m^{-2\ell-4})$. For $e(x')=\ell+1$, the mass at $x'$ is $\Theta(m^{-2\ell-2})$, of which a proportion of $\Theta(1/m)$ enters $x'$, so the contribution is $O(m^{-2\ell-3})$. Since $\pi(x)=O(m^{-2\ell-2})$, these contributions are $O(1/m)\pi(x)$.

Now consider a state $x'$ with $e(x')=\ell$, which differs from $x$ by the addition of an edge $(i,j)$. Let $c_1,\ldots,c_k$ be the degrees of the hubs of $x$. By Lemma \ref{lem:probs}, $\pi(x')/\pi(x)=(d_i-c_i)(d_j-c_j)/2(K-\ell)$, and the probability of a transition to $x$ is $2(K-\ell)/m(m-1)$. By assumption, the probability at $x'$ is within $1+O(1/m)$ of stationary, so the mass entering $x$ from $x'$ is
\[\left(1+O\left(\frac1m\right)\right)\frac{(d_i-c_i)(d_j-c_j)}{m(m-1)}\pi(x).\]
Since such states $x'$ are enumerated by edges $(i,j)$ that can be added to $x$, the total mass entering $x$ from states $x'$ with $e(x')=\ell$ is
\[\sum_{(i,j)\notin x}\left(1+O\left(\frac1m\right)\right)\frac{(d_i-c_i)(d_j-c_j)}{m(m-1)}\pi(x)=\left(C_x+O\left(\frac1m\right)\right)\pi(x),\]
where $C_x$ is a constant depending only on $x$.

Mass exits from $x$ to $x'$ when two hub-leaf edges at $i$ and $j$ are chosen, which occurs with probability $(d_i-c_i)(d_j-c_j)/m(m-1)$. Again, this can occur for any pair $i$ and $j$ which is not an edge, hence the probability $p_t(x)=\P(X_{\tau+t}=x)$ satisfies the recursion
\[p_{t+1}(x)=\left(C_x+O\left(\frac1m\right)\right)\pi(x)+\left(1-C_x+O\left(\frac1m\right)\right)p_t(x).\]
Solving the recursion, the number of steps for $p_t$ to reach $(1+O(1/m))\pi(x)$ is $T_{\ell+1}(x)=O(\log\pi(x))=O(\log m)$. Since the state space of $x$ is bounded in size, we can take $T_{\ell+1}=\max_xT_{\ell+1}(x)=O(\log m)$, which completes the proof.
\end{proof}

By induction, Lemma \ref{lem:induction} show that there is $T=O(\log m)$ such $X_{\tau+T}$ is within a factor of $1+O(1/m)$ from stationary for all equivalence classes, which completes the proof of uniformity.

Finally, as with Theorem \ref{thm:uniform}, only the steps that actually remove a bad edge are important, so we can optimise the algorithm to always remove a bad edge at each step, which improves the quadratic running time to $O(m)$ without changing the bound on uniformity.
\end{proof}

\subsection{Erd\H os numbers}
As mentioned in the introduction to Section \ref{sec:graph}, the statistical study of Erd\H os numbers is a natural application for sampling graphs with given degree sequence. The underlying data structure is the collaboration graph, which is formed by taking the vertex set to be the set of all mathematicians, and adding an edge between every pair of mathematicians who have coauthored a publication. Many statistical questions can be answered by comparing the observed collaboration graph to typical graphs with the same degree sequence, where the rationale for fixing the degree sequence is that Erd\H os numbers would not be as interesting if Erd\H os did not have by far the highest number of collaborators of any mathematician.

Using the Expand and Contract algorithm described in Section \ref{sec:graphmain}, and the degree sequence of the collaboration graph generously provided by the Erd\H os Number Project \citep{grossman}, we generated random graphs with this degree sequence to study the mean Erd\H os number (the mean graph distance to Erd\H os among mathematicians within the same connected component). In 10,000 trials, the mean Erd\H os number of a random graph with the same degree sequence was 4.119 on average, with a sample standard deviation of 0.025, indicating very tight concentration around the mean.

In contrast, the actual mean Erd\H os number in the real world is 4.686\footnote{This mean Erd\H os number of 4.686 differs from the 4.65 cited in the introduction because it was computed using older data (2000), since the full collaboration graph for the newer data (2004) was not available.}, which is 22 standard deviations higher than the expected number. This indicates that there are strong sociological factors that govern the distribution of Erd\H os numbers, although it is not clear what these factors might one. One possible explanation is that mathematicians are more likely to find new collaborators who are already close to themselves in the collaboration graph, and thus the reduction in mean Erd\H os number from each new collaboration is lower than if collaborations were chosen at random.

\section{Conclusion}
For graphs with given degree sequence, the $O(m^{1/4})$ sparseness bound arises independently in three different algorithms: \cite{mckaywormald}, \cite{bayati} and now Expand and Contract. Thus, the most significant question remaining is whether fast uniform sampling can be extended to denser degree sequences. For regular graphs, the algorithm of \cite{bayati} works up to $d_n=O(m^{1/3-\epsilon})$, so a natural question is whether the same can be proven for Expand and Contract in the regular case.

Expand and Contract also runs extremely fast---the runtime of $O(m)$ is best possible in the sense that $\Omega(m)$ steps are necessary to generate the $m$ edges of the graph. An interesting question is whether linear-time sampling is possible for all degree sequences, and if not, whether we can find a lower bound on the runtime. The question of runtime also leads to the practical question of which sampling algorithm to choose in a given scenario. Most results in the literature are asymptotic in nature, but the extraction of explicit constants from the proofs is important for deciding which algorithm to use in real-world applications.

Additionally, the uniform sparseness bound is not particularly useful in the real world, where graphs tend to be mostly sparse with a few high degree vertices. Expand and Contract shows great promise here in that it provably works well for multi-star graphs, which can be considered the most extreme case of such a graph. Since the model is highly probabilistic in nature, one possible improvement is to change the bound on maximum degree to a bound on some notion of average degree.

Graphs with other restrictions---for example, requiring connectedness, given eigenvalues, or network motifs---are also important, as are countless other combinatorial families, such as permutations or matrices with various restrictions. Expand and Contract is a general strategy that can be applied to all of these, and the examples of Section \ref{sec:examples} show that it is highly versatile. We hope that many other useful algorithms will arise from this idea.

\bibsep 0pt


\begin{thebibliography}{}
\bibitem[Andersen and Diaconis(2007)]{andersen} Andersen, H.C., Diaconis, P. (2007). Hit and run as a unifying device. \textit{J. Soc. Fr. Stat.} \textbf{148} 5--28. \MR{2502361}
\bibitem[Ardell(1994)]{ardell} Ardell, D.H. (1994). TOPE and magic squares: A simple GA approach to combinatorial optimization. Koza, J.R., ed., \textit{Genetic Algorithms at Stanford 1994}, Stanford Bookstore.
\bibitem[Bayati, Kim and Saberi(2010)]{bayati} Bayati, M., Kim, J.H., Saberi, A. (2010). A sequential algorithm for generating random graphs. \textit{Algorithmica} \textbf{58} 860--910. \MR{2726458}
\bibitem[Bender and Canfield(1978)]{bender} Bender, E.A., Canfield, E.R. (1978). The asymptotic number of labeled graphs with given degree sequences. \textit{J. Combin. Theory Ser. A} \textbf{24} 296--307. \MR{0505796}
\bibitem[Bez\'akov\'a, Bhatnagar and Vigoda(2006)]{bezakova} Bez\'akov\'a, I., Bhatnagar, N., Vigoda, E. (2006). Sampling binary contingency tables with a greedy start. \textit{Random Structures Algorithms} \textbf{30} 168--205. \MR{2283227}
\bibitem[Bez\'akov\'a, Sinclair, \v Stefankovi\v c and Vigoda(2012)]{bezakova2} Bez\'akov\'a, I., Sinclair, A., \v Stefankovi\v c, D., Vigoda, E. (2012). Negative examples for sequential importance sampling of binary contingency tables. \textit{Algorithmica} \textbf{64} 606--620.
\bibitem[Blanchet(2009)]{blanchet} Blanchet, J.H. (2009). Efficient importance sampling for binary contingency tables. \textit{Ann. Appl. Probab.} \textbf{19} 949--982. \MR{2537195 }
\bibitem[Blitzstein and Diaconis(2010)]{blitzstein} Blitzstein, J., Diaconis, P. (2010). A sequential importance sampling algorithm for generating random graphs with prescribed degrees. \textit{Internet Math.} \textbf{6} 489--522. \MR{2809836}
\bibitem[Bollob\'as(1980)]{bollobas} Bollob\'as, B. (1980). A probabilistic proof of an asymptotic formula for the number of labelled regular graphs. \textit{European J. Combin.} \textbf{1} 311--316. \MR{0595929}
\bibitem[Chatterjee and Diaconis(2013)]{chatterjee2} Chatterjee, S., Diaconis, P. (2013). In preparation.
\bibitem[Chatterjee, Diaconis and Sly(2011)]{chatterjee} Chatterjee, S., Diaconis, P., Sly, A. (2011). Random graphs with a given degree sequence. \textit{Ann. Appl. Probab.} \textbf{21} 1400--1435. \MR{2857452}
\bibitem[Chen, Diaconis, Holmes and Liu(2005)]{chen} Chen, Y., Diaconis, P., Holmes, S.P., Liu, J.S. (2005). Sequential Monte Carlo methods for statistical analysis of tables. \textit{J. Amer. Statist. Assoc.} \textbf{100} 109--120. \MR{2156822}
\bibitem[Diaconis and Shahshahani(1986)]{shahshahani} Diaconis, P., Shahshahani, M. (1987). The subgroup algorithm for generating uniform random variables. \textit{Probab. Engrg. Inform. Sci.} \textbf{1} 15--32.
\bibitem[Duchon, Flajolet, Louchard and Schaeffer(2004)]{duchon} Duchon, P., Flajolet, P., Louchard, G., Schaeffer, G. (2004). Boltzmann samplers for the random generation of combinatorial structures. \textit{Combin. Probab. Comput.} \textbf{13} 577-625. \MR{2095975}
\bibitem[Grossman(2004)]{grossman} Grossman, J.W. (2004). \textit{Erd\H os Number Project}. \texttt{http://www.oakland.edu/enp/}
\bibitem[Jerrum and Sinclair(1989)]{jerrum1} Jerrum, M., Sinclair, A. (1989). Approximating the permanent. \textit{SIAM J. Comput.} \textbf{18} 1149--1189. \MR{1025467}
\bibitem[Jerrum and Sinclair(1990)]{jerrum2} Jerrum, M., Sinclair, A. (1990). Fast uniform generation of regular graphs. \textit{Theoret. Comput. Sci.} \textbf{73} 91--100. \MR{1060313}
\bibitem[Jerrum, Sinclair and Vigoda(2004)]{jerrum3} Jerrum, M., Sinclair, A., Vigoda, E. (2004). A polynomial-time approximation algorithm for the permanent of a matrix with nonnegative entries. \textit{J. ACM} \textbf{51} 671--697. \MR{2147852}
\bibitem[Kitajima and Kikuchi(2013)]{kitajima} Kitajima, A., Kikuchi, M. (2013). Numerous but rare: an exploration of magic squares. Unpublished manuscript.
\bibitem[Knuth(1969)]{knuth} Knuth, D.E. (1969). \textit{The Art of Computer Programming, Vol. 2}. Addison-Wesley. \MR{0633878}
\bibitem[Madras and Zheng(2003)]{madras} Madras, N., Zheng, Z. (2003). On the swapping algorithm. \textit{Random Structures Algorithms} \textbf{22} 66--97. \MR{1943860}
\bibitem[Marinari and Parisi(1992)]{marinari} Marinari, E., Parisi, G. (1992). Simulated tempering: a new Monte Carlo scheme. \textit{Europhys. Lett.} \textbf{19} 451--458. 
\bibitem[McKay(1985)]{mckay} McKay, B. (1985). Asymptotics for symmetric 0-1 matrices with prescribed row sums. \textit{Ars Combin.} \textbf{19} 15--25. \MR{0790916}
\bibitem[McKay and Wormald(1990)]{mckaywormald} McKay, B., Wormald, N.C. (1990). Uniform generation of random regular graphs of moderate degree. \textit{J. Algorithms} \textbf{11} 52--67. \MR{1041166}
\bibitem[Nijenhuis and Wilf(1978)]{nijenhuiswilf} Nijenhuis, A., Wilf, H.S. (1978). \textit{Combinatorial Algorithms}. Academic Press.
\bibitem[Pak(1998)]{pak} Pak, I. (1998). When and how $n$ choose $k$. \textit{AMS DIMACS} \textbf{43} 191--237. \MR{1660787}
\bibitem[Sinclair(1993)]{sinclair} Sinclair, A. (1993). Algorithms for random generation and counting: a Markov chain approach. \textit{Progr. Theoret. Comput. Sci.}, Birkh\"auser. \MR{1201590}
\bibitem[Steger and Wormald(1999)]{steger} Steger, A., Wormald, N.C. (1999). Generating random regular graphs quickly. \textit{Combin. Probab. Comput.} \textbf{8} 377--396. \MR{1723650}
\bibitem[Wang and Landau(2001)]{wang} Wang, F., Landau, D.P. (2001). Efficient, multiple-range random walk algorithm to calculate the density of states. \textit{Phys. Rev. Lett.} \textbf{86} 2050--2053.
\bibitem[Wilf(1994)]{wilf} Wilf, H.S. (1994). \textit{Generatingfunctionology, 2nd Ed.} Academic Press. \MR{1277813}
\bibitem[Wormald(1984)]{wormald} Wormald, N.C. (1984). Generating random regular graphs. \textit{J. Algorithms} \textbf{5} 247--280. \MR{0744493}
\end{thebibliography}
\end{document}